\pgfplotsset{compat=newest}
\newcommand\footnoteref[1]{\protected@xdef\@thefnmark{\ref{#1}}\@footnotemark}
\newtheorem{theorem}{Theorem}
\newtheorem{lemma}[theorem]{Lemma}
\newtheorem{remark}[theorem]{Remark}
\newtheorem{definition}{Definition}
\newenvironment{mymatrix}{\begin{bmatrix}} {\end{bmatrix} }
\def\ve#1{{\mathchoice{\mbox{\boldmath$\displaystyle #1$}}%
              {\mbox{\boldmath$\textstyle #1$}}%
              {\mbox{\boldmath$\scriptstyle #1$}}%
              {\mbox{\boldmath$\scriptscriptstyle #1$}}}}
\DeclareMathOperator{\extsmallfield}{ext}
\DeclareMathOperator{\rank}{rk}
\DeclareMathOperator{\ftemp}{f}
\renewcommand{\S}{\ve{S}}
\renewcommand{\H}{\ve{H}}
\renewcommand{\a}{\ve{a}}
\newcommand{\e}{\ve{e}}
\renewcommand{\v}{\ve{v}}
\renewcommand{\c}{\ve{c}}
\renewcommand{\e}{\ve{e}}
\newcommand{\A}{\ve{A}}
\newcommand{\B}{\ve{B}}
\newcommand{\D}{\ve{D}}
\newcommand{\NM}{\mathrm{NM}}
\newcommand{\ZZ}{\mathbb{Z}}
\newcommand{\frk}{\mathrm{frk}}
\newcommand{\rk}{\mathrm{rk}}
\newcommand{\T}{\ve{T}}
\newcommand{\Rq}{{R}_q}
\newcommand{\Rqm}{{R}_{q,m}}
\newcommand{\Fspace}{\mathcal{F}}
\renewcommand{\r}{\ve{r}}
\newcommand{\s}{\ve{s}}
\newcommand{\Sspace}{\mathcal{S}}
\newcommand{\Espace}{\mathcal{E}}
\newcommand{\supp}{\mathrm{supp}_\mathrm{R}}
\newcommand{\maxIdeal}{\mathfrak{m}}
\title{Low-Rank Parity-Check Codes over the Ring of Integers Modulo a Prime Power}
\author{Julian Renner}
\author{\IEEEauthorblockN{Julian Renner$^1$, Sven Puchinger$^2$, Antonia Wachter-Zeh$^1$, Camilla Hollanti$^3$, Ragnar Freij-Hollanti$^3$}
\IEEEauthorblockA{
$^1$Institute for Communications Engineering, Technical University of Munich (TUM), Germany\\
$^2$Department of Applied Mathematics and Computer Science, Technical University of Denmark (DTU), Denmark\\
$^3$Department of Mathematics and Systems Analysis, Aalto University, Finland\\
Email: julian.renner@tum.de, svepu@dtu.dk, antonia.wachter-zeh@tum.de, camilla.hollanti@aalto.fi, ragnar.freij@aalto.fi
\thanks{S. Puchinger has received funding from the European Union’s Horizon 2020 research and innovation programme under the Marie Skłodowska-Curie grant agreement no.~713683 (COFUNDfellowsDTU).\newline
\indent J. Renner and A. Wachter-Zeh were supported by the European Research Council (ERC) under the European Union’s Horizon 2020 research and innovation programme (grant agreement No 801434). \newline
\indent C. Hollanti was supported by the Academy of Finland, under grants 303819 and 318937, and by the TU Munich -- Institute for Advanced Study, funded by the German Excellence Initiative and the EU 7th Framework Programme under grant agreement no. 291763, via a Hans Fischer Fellowship.}
}}
\begin{document}

\maketitle

\begin{abstract}
We define and analyze low-rank parity-check (LRPC) codes over extension rings of the finite chain ring $\ZZ_{p^r}$, where $p$ is a prime and $r$ is a positive integer.
 LRPC codes have originally been proposed by Gaborit \emph{et al.}~(2013) over finite fields for cryptographic applications.
The adaption to finite rings is inspired by a recent paper by Kamche \emph{et al.}~(2019), which constructed Gabidulin codes over finite principle ideal rings with applications to space-time codes and network coding.
We give a decoding algorithm based on simple linear-algebraic operations.
Further, we derive an upper bound on the failure probability of the decoder. 
The upper bound is valid  for errors whose rank is equal to the free rank.
\end{abstract}

\section{Introduction}

Low-rank parity check (LRPC) codes were introduced over finite fields in \cite{gaborit2013low} and are rank-metric codes with applications to cryptography \cite{gaborit2013low}, powerline communications \cite{yazbek2017LRPCPowerLine}, and network coding \cite{8377229}.
They can be seen as the rank-metric analogs of low-density parity-check codes in the Hamming metric.
In \cite{aragon2019low}, new decoders for LRPC codes were proposed.
Compared to other known rank-metric codes, LRPC codes have a comparably low minimum distance, but their decoding is efficient and they have a weak algebraic structure. The latter property makes them suitable for cryptography: cryptosystems based on LRPC codes \cite{melchor2019rollo} are among the most promising candidates for future public-key encryption and key encapsulation systems that are secure against attacks by quantum computers. They achieve small public key sizes compared to other code-based systems and are supported by strong security reductions.

The rank metric and most of the known rank-metric codes \cite{de78,ga85a, ro91, sheekey2019mrd}  have been originally defined over finite fields.
Recently, \cite{kamche2019rank} studied rank-metric codes over finite principal ideal rings and defined, analyzed and proposed a decoder for Gabidulin codes over these rings. They also studied applications to network coding and space-time coding, where the codes over finite
rings have advantages compared to rank-metric codes over finite
fields.

In this paper, we combine the ideas of \cite{gaborit2013low} and \cite{kamche2019rank} by studying LRPC codes over the finite chain ring $\ZZ_{p^r}$, where a finite chain ring is a ring whose ideals are linearly ordered by inclusion. We describe a decoder that is similar to \cite{gaborit2013low} and analyze its failure probability for error vectors whose rank is equal to the free rank. This limitation is acceptable since in applications like the McEliece cryptosystem the errors can be restricted to such vectors. 
Similar to \cite{gaborit2013low}, the main difficulty is the derivation of a bound on the failure probability, which becomes even more involved when replacing fields by rings.

The results constitute a proof of concept that LRPC codes work also over finite rings.
Similar to Gabidulin codes over rings, the new codes can be applied to network coding and space-time coding,  
cf.~\cite{kamche2019rank}. The benefit of ring LRPC codes compared to ring Gabidulin codes is a potentially faster and simpler decoder, which comes at the cost of a small failure probability.
Furthermore, these codes can 
be considered for code-based cryptography, where replacing a field by a finite ring might increase the cost of generic decoding attacks.
Studying these applications in detail is out of the scope of this paper and should be done in future work.
It would also be interesting, especially in the context of cryptography, to extend the codes and the decoder failure bound to a wider class of finite rings.

\section{Preliminaries}

We use a similar notation and the properties of rings stated in \cite{kamche2019rank}. 
Let $p$ be a prime, $r$ and $m$ positive integers, $q \! = \!p^r$, $\Rq \! = \! \ZZ_{q}$, and $\Rqm \! =  \! \Rq[x]/(h)$, where $h \! \in \! \Rq[x]$ is a monic polynomial of degree $m$, which, if projected to $\mathbb{F}_p[x]$\footnote{\vspace{-0.01cm}Note that one can map every element of $\mathbb{F}_p$ to an arbitrary element of the residue class $\Rq/\maxIdeal$, where $\maxIdeal$ is maximal ideal of $\Rq$.}, is irreducible over $\mathbb{F}_p$. Note that elements in $\Rqm$ can be seen as vectors in $\Rq^m$.

We denote the set of $m\times n$ matrices over a ring $R$ by $R^{m\times n}$ and the set of row vectors of length $n$ over $R$ by $R^{n} = R^{1\times n}$. Rows and columns of $m\times n$ matrices are indexed by $1,\hdots,m$ and $1,\hdots,n$, where $A_{i,j}$ is the entry in the $i$-th row and $j$-th column of the matrix $\A$. For all $\A \in R^{m \times n}$, there exist an invertible matrix $\S \in R^{m\times m}$, an invertible matrix $\T \in R^{n\times n}$ and a diagonal matrix $\D \in R^{m\times n}$ such that $\D = \S \A \T$, where $\D$ is called a Smith normal form of $\A$. The rank and the free rank of $\A$ is defined by $\rk (\A) := |\{ i\in\{1,\hdots,\min\{m,n\}\}: \D_{i,i} \not = 0 \}|$ and $\frk (\A) := |\{ i \in \{1,\hdots,\min\{m,n\}\} :\D_{i,i} \text{ is a unit} \}|$, respectively.

Let $\ve{\gamma}=[\gamma_1,\hdots,\gamma_m]$ be an ordered basis of $\Rqm$ over $\Rq$. By utilizing the vector space isomorphism $\Rqm \cong \Rq^m$ , we can relate each vector $\a \in \Rqm^{n}$ to a matrix $\A \in \Rq^{m\times n}$ according to $\extsmallfield_{\gamma} : \Rqm^{n} \rightarrow \Rq^{m\times n}, \a \mapsto \A$, where $a_j = \sum_{i=1}^{m} A_{i,j} \gamma_{i}$,\ $j \in \{1,\hdots,n\}$. 
Note that $\Rqm$ is a ring of $q^m$ elements and a free $\Rq$-module of dimension $m$. Hence, elements of $\Rqm$ can be treated as vectors in $\Rq^m$ and linear independence, $\Rq$-subspaces of $\Rqm$ and the $\Rq$-linear span of elements are well-defined. 

\begin{lemma}[{\!\!\cite[Lemma~2.4]{kamche2019rank}}]\label{lem:Rqm_units}
Let $x \in \Rqm$. Then, $x$ is linearly independent over $\Rq$ if and only if $x$ is a unit in $\Rqm$.
\end{lemma}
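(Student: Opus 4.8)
The plan is to exploit the fact that $\Rqm$ is a finite chain ring (a Galois ring) whose unique maximal ideal is generated by $p$, so that its non-units are precisely the elements of $p\Rqm$. First I would unwind the definitions: a single element $x \in \Rqm$ is linearly independent over $\Rq$ exactly when its $\Rq$-annihilator is trivial, i.e.\ when $ax = 0$ with $a \in \Rq$ forces $a = 0$. The whole statement then reduces to comparing this annihilator condition with the invertibility of $x$.

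For the direction ``unit $\Rightarrow$ independent'', suppose $x$ is invertible in $\Rqm$ and $ax = 0$ for some $a \in \Rq \subseteq \Rqm$; multiplying by $x^{-1}$ immediately yields $a = 0$, so $x$ is linearly independent. For the converse I would argue by contraposition. If $x$ is not a unit, then by the chain-ring structure $x$ lies in the maximal ideal $p\Rqm$, say $x = p y$ with $y \in \Rqm$. Since $\Rqm$ has characteristic $p^r$, we have $p^{r-1} \neq 0$ in $\Rq$ while $p^{r-1} x = p^{r} y = 0$. Hence $a = p^{r-1}$ is a nonzero scalar annihilating $x$, so $x$ is linearly dependent over $\Rq$, which establishes the claim.

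The only non-trivial ingredient is the structural claim that the non-units of $\Rqm$ coincide with $p\Rqm$, i.e.\ that $\Rqm$ is local with maximal ideal $(p)$; this is where I expect to lean on the hypothesis that $h$ reduces to a polynomial that is irreducible modulo $p$. Concretely, the reduction map $\Rqm \to \Rqm/p\Rqm \cong \mathbb{F}_p[x]/(\bar h) \cong \mathbb{F}_{p^m}$ is a quotient onto a field, while every element of $p\Rqm$ is nilpotent (as $(p y)^{r} = p^{r} y^{r} = 0$), so $p\Rqm$ is simultaneously the nilradical and a maximal ideal; it is therefore the \emph{unique} maximal ideal, and the non-units are exactly $p\Rqm$. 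This is a standard Galois-ring fact and is already available from the ring properties imported from \cite{kamche2019rank}, so once it is in place both implications are one-line computations.
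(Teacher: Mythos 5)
Your proof is correct. It is worth pointing out that the paper itself does not prove this lemma at all: it is imported verbatim from \cite[Lemma~2.4]{kamche2019rank}, where it is established in the more general setting of Galois extensions of finite principal ideal rings. Your argument is a clean, self-contained specialization to $\Rqm = \Rq[x]/(h)$ that exploits precisely the structure the paper's setup guarantees: since $h$ is monic and irreducible modulo $p$, the quotient $\Rqm/p\Rqm \cong \mathbb{F}_{p^m}$ is a field, every element of $p\Rqm$ is nilpotent (characteristic $p^r$), and hence $p\Rqm$ is the unique maximal ideal, so the non-units are exactly $p\Rqm$. With that in place, both implications are indeed one-liners: multiplication by $x^{-1}$ kills any annihilating scalar (using that $\Rq$ embeds into $\Rqm$, which holds because $\Rqm$ is free over $\Rq$ with basis $1, \dots$, so $a=0$ in $\Rqm$ forces $a=0$ in $\Rq$), and conversely $x = py$ gives the nonzero annihilator $p^{r-1}$, since $p^{r-1}x = p^r y = 0$ while $p^{r-1} \neq 0$ in $\Rq$. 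This second step is in the same spirit as the paper's own Lemmas~\ref{lem:Rq_decomposition} and~\ref{lem:Rqm_decomposition}, which decompose elements as $p^j$ times a unit. What your route buys is independence from the external reference at the cost of generality: the cited lemma covers arbitrary finite principal ideal base rings, whereas your nilpotency argument leans on the chain-ring structure of $\ZZ_{p^r}$ specifically, which is all this paper ever uses. One purely cosmetic remark: you use $x$ both for the lemma's ring element and for the indeterminate of $h$; renaming the indeterminate would avoid the clash.
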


The lemma above implies that $x \in \Rqm$ is a unit if and only if at least one entry of its vector representation is a unit.
Note that we have $|\Rq^*| = q(1-1/p)$, so $|\Rqm^*| = q^m-(|\Rq|-|\Rq^*|)^m = q^m[1-(1/p)^m]$.

\begin{lemma}\label{lem:Rq_decomposition}
For any $x \in \Rq \setminus \{0\}$, there is a unique integer $j \in \{0,\dots,r-1\}$ such that $ \exists \, x^* \in \Rq^*$ with
$x = p^j x^*$.
\end{lemma}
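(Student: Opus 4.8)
The plan is to use that $\Rq=\ZZ_{p^r}$ is a finite chain ring whose unique maximal ideal is $\maxIdeal=(p)$, so that an element is a unit precisely when it is coprime to $p$ (equivalently, when its reduction modulo $p$ is nonzero); this matches the count $|\Rq^*|=q(1-1/p)$ recorded above. I would also note at the outset that the statement asserts uniqueness only of the exponent $j$, not of the unit $x^*$, which keeps the argument short.

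For existence, I would lift $x$ to an integer representative $\tilde{x}\in\ZZ$ and set $j:=v_p(\tilde{x})$, the $p$-adic valuation, so that $\tilde{x}=p^j u$ with $p\nmid u$. Since $x\neq 0$ in $\Rq$ we have $p^r\nmid\tilde{x}$, and combined with $p\nmid u$ this forces $j<r$, i.e.\ $j\in\{0,\dots,r-1\}$. Reducing $u$ modulo $p^r$ gives an element $x^*\in\Rq$ that is coprime to $p$, hence a unit, and by construction $x=p^j x^*$ in $\Rq$.

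For uniqueness of $j$, suppose $x=p^j x^*=p^{j'}y^*$ with $j,j'\in\{0,\dots,r-1\}$ and $x^*,y^*\in\Rq^*$, and assume without loss of generality $j\le j'$. Choosing integer representatives $a,b$ of $x^*,y^*$ (both coprime to $p$), the equality in $\Rq$ means $p^r\mid p^j a-p^{j'}b=p^j(a-p^{j'-j}b)$, hence $p^{r-j}\mid a-p^{j'-j}b$. Since $j\le r-1$ we have $r-j\ge 1$, so $p\mid a-p^{j'-j}b$; if $j<j'$ then $p\mid p^{j'-j}b$, which would force $p\mid a$, contradicting that $a$ is coprime to $p$. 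Therefore $j=j'$.

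The computations here are elementary, so the main thing to be careful about is the bookkeeping between elements of $\Rq$ and their integer lifts: I must check that the valuation $j$ extracted from a lift is independent of the chosen representative (changing $\tilde{x}$ by a multiple of $p^r$ leaves $v_p$ unchanged because $j<r$), and that coprimality to $p$ is exactly the unit condition in $\Rq$. I do not expect a genuine obstacle; the only subtlety worth flagging is that $x^*$ itself is determined only modulo $p^{\,r-j}$, which is precisely why the lemma claims uniqueness of $j$ alone.
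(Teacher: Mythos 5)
Your proof is correct. The paper disposes of this lemma in a single line by invoking the chain-ring structure of $\Rq$: its ideals are linearly ordered and generated by $p^0,\dots,p^{r-1}$, and $j$ is characterized as the largest integer with $x\in(p^j)$, which yields existence and uniqueness simultaneously. You instead verify everything from first principles via integer lifts: existence through the $p$-adic valuation $j=v_p(\tilde{x})$ of a representative (with the correct checks that $x\neq 0$ in $\Rq$ forces $j<r$, that $v_p$ is independent of the chosen lift precisely because $j<r$, and that coprimality to $p$ is the unit condition), and uniqueness through the congruence $p^{r-j}\mid a-p^{j'-j}b$, which forces $p\mid a$ whenever $j<j'$ — a contradiction. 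The two arguments pin down the same $j$, since $v_p(\tilde{x})$ equals the largest $j$ with $x\in(p^j)$, so the mathematical content coincides; what your route buys is a self-contained elementary argument requiring no background on chain rings, at the cost of the lift-versus-residue bookkeeping you rightly flag. The paper's structural route is shorter and is the formulation that transfers directly to Lemma~\ref{lem:Rqm_decomposition} and to more general finite chain rings, where integer lifts are unavailable. Your closing remarks — that only the exponent $j$ is unique while $x^*$ is determined exactly modulo $p^{r-j}$ — are accurate and slightly more precise than what the paper records.
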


\begin{proof}
Trivial since $\Rq$ is a finite chain ring, the integers $p^0,\dots,p^{r-1}$ generate the ideals of the ring, and $j$ is the largest integer such that $x\in(p^j)$.
\end{proof}

\begin{lemma}\label{lem:Rqm_decomposition}
For any $x \in \Rqm \setminus \{0\}$, there is a unique integer $j \in \{0,\dots,r-1\}$ such that $ \exists \, x^* \in \Rqm^*$ with
$x = p^{j} x^*$.
\end{lemma}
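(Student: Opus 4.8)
The plan is to reduce the statement to its $\Rq$-analogue, Lemma~\ref{lem:Rq_decomposition}, by working in the vector representation $\extsmallfield_{\ve{\gamma}}$ and applying the decomposition coordinate-wise. Write $x=\sum_{i=1}^m x_i\gamma_i$ with $x_i\in\Rq$. The key structural fact I will use is that multiplication of $x$ by the scalar $p^j\in\Rq$ acts coordinate-wise on this representation, since $\ve{\gamma}$ is an $\Rq$-basis; thus $p^j\sum_i x_i^*\gamma_i = \sum_i (p^j x_i^*)\gamma_i$. To detect units I will invoke the remark following Lemma~\ref{lem:Rqm_units}: an element of $\Rqm$ is a unit precisely when at least one coordinate of its vector representation lies in $\Rq^*$.

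For existence, since $x\neq 0$ at least one coordinate is nonzero. Applying Lemma~\ref{lem:Rq_decomposition} to each nonzero coordinate gives $x_i=p^{j_i}x_i^*$ with $x_i^*\in\Rq^*$ and $j_i$ uniquely determined. Set $j:=\min\{j_i : x_i\neq 0\}$ and define $x^*:=\sum_{i:\,x_i\neq 0} p^{\,j_i-j}x_i^*\,\gamma_i$, so that $p^j x^* = x$ by the coordinate-wise action above. Any coordinate attaining the minimum $j_i=j$ contributes the unit $x_i^*$ to $x^*$, so by the remark after Lemma~\ref{lem:Rqm_units}, $x^*\in\Rqm^*$. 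Finally $j\le r-1$ because $j_i\le r-1$ for every nonzero $x_i$ by Lemma~\ref{lem:Rq_decomposition}.

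For uniqueness of $j$, suppose $x=p^{j'}y^*$ with $y^*\in\Rqm^*$ and $j'\in\{0,\dots,r-1\}$. Reading this coordinate-wise gives $x_i=p^{j'}y_i^*$, so $x_i\in(p^{j'})$ in $\Rq$ for every $i$; since the coordinate valuation $j_i$ is the largest exponent with $x_i\in(p^{j_i})$, this forces $j_i\ge j'$ for all nonzero coordinates, hence $j=\min_i j_i\ge j'$. Conversely, because $y^*$ is a unit, some coordinate $y_\ell^*$ is a unit, and then $x_\ell=p^{j'}y_\ell^*$ has $p$-valuation exactly $j'$ by the uniqueness part of Lemma~\ref{lem:Rq_decomposition}; thus $j'=j_\ell\ge\min_i j_i=j$. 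Combining the two inequalities yields $j'=j$.

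The argument is essentially bookkeeping once the coordinate-wise viewpoint is set up, so I do not anticipate a genuine obstacle; the only point requiring care is that the scalar $p^j$ distributes over the $\Rq$-basis representation (used in both directions) and that \emph{some} coordinate --- namely a valuation-minimizing one --- is responsible for $x^*$ being a unit. An alternative, more abstract route would be to note that $\Rqm$ is the Galois ring $\mathrm{GR}(p^r,m)$, hence a finite chain ring with ideals $(p^0)\supsetneq\dots\supsetneq(p^{r})=\{0\}$, and then repeat the proof of Lemma~\ref{lem:Rq_decomposition} verbatim; the coordinate-wise proof above is preferable here because it only relies on results already established in the excerpt.
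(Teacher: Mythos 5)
Your proof is correct and follows essentially the same route as the paper, whose (two-line) proof likewise combines Lemma~\ref{lem:Rqm_units} with Lemma~\ref{lem:Rq_decomposition} and takes $j$ to be the minimum of the valuations of the coordinates of $x$. You have simply written out in full the details (the explicit construction of $x^*$ and the uniqueness argument) that the paper leaves implicit.
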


\begin{proof}
The proof follows directly from Lemma~\ref{lem:Rqm_units} and Lemma~\ref{lem:Rq_decomposition} by choosing $j$ to be the minimum of the $j$'s of the entries of the vector representation of $x$ (this is independent of the  basis).
\end{proof}

The \emph{ (free) rank norm} of a vector $\a \in \Rqm^{n}$ is denoted by $(\ftemp)\rk_{\Rq}(\a)$ and is the (free) rank of the matrix representation $\A$, \emph{i.e.}, $\rk_{\Rq}(\a) := \rk(\A)$ and $\frk_{\Rq}(\a) := \frk(\A)$, respectively.

The $\Rq$-linear module that is spanned by $v_1,\hdots,v_{\ell} \in \Rqm$ is denoted by $\langle v_1,\dots,v_n \rangle_{\Rq} := \big\{\sum_{i=1}^{\ell} a_i v_i : a_i \in \Rq \big\}$. The $\Rq$-linear module that is spanned by the entries of a vector $\a \in \Rqm^{n}$ is called the support of $\a$, \emph{i.e.}, $\supp(\a) := \langle a_1,\dots,a_n \rangle_{\Rq}$. If a support has a basis, we refer to it as a free support. Further, $A B$ denotes the product module of two submodules $A$ and $B$ of $\Rqm$.

\section{LRPC Codes}

\begin{definition}\label{def:LRPCcodes}
Let $k,n,\lambda$ be positive integers with $0<k<n$. Furthermore, let $\Fspace \subseteq \Rqm$ be a free $\Rq$-submodule of $\Rqm$ of dimension $\lambda$.
A low-rank parity-check (LRPC) code with parameters $\lambda,n,k$ is a code with a parity-check matrix
$\H \in \Rqm^{(n-k) \times n}$
such that $\rank (\H) = n-k$ and $\Fspace = \langle H_{1,1},\dots,H_{(n-k),n} \rangle_{\Rq}$.
\end{definition}

Note that an LRPC code is a free submodule of $\Rqm^n$ of rank $k$.
We define the following three additional properties of the parity-check matrix that we will use throughout the paper to prove the correctness of our decoder and to derive failure probabilities.

\begin{definition}\label{def:H_properties}
Let $\lambda$, $\Fspace$, and $\H$ be defined as in Definition~\ref{def:LRPCcodes}.
Let $f_1,\dots,f_\lambda \in \Rqm$ be a free basis of $\Fspace$.
For $i=1,\dots,n-k$, $j=1,\dots,n$, and $\ell=1,\dots,\lambda$, let $h_{i,j,\ell} \in \Rq$ be the unique elements such that $H_{i,j} = \sum_{\ell = 1}^{\lambda} h_{i,j,\ell} f_{\ell}$.
Define
\begin{equation}
\H_{\mathrm{ext}} :=
\begin{bmatrix}
h_{1,1,1} & h_{1,2,1} & \hdots & h_{1,n,1} \\
h_{1,1,2} & h_{1,2,2} & \hdots & h_{1,n,2} \\
\vdots & \vdots & \ddots & \vdots \\
h_{2,1,1} & h_{2,2,1} & \hdots & h_{2,n,1} \\
h_{2,1,2} & h_{2,2,2} & \hdots & h_{2,n,2} \\
\vdots & \vdots & \ddots & \vdots \\
\end{bmatrix}
\in \Rq^{(n-k)\lambda \times n}.
\label{eq:H_ext}
\end{equation}
Then, $\H$ has the
\begin{enumerate}
\item \textbf{unique-decoding property} if 
$\lambda \geq \tfrac{n}{n-k}$ and $\frk \left( \H_{\mathrm{ext}} \right) = \rk \left(  \H_{\mathrm{ext}} \right) = n$,

\item \textbf{maximal-row-span property} if every row of the parity-check matrix $\H$ spans the entire space $\Fspace$,
\item \textbf{unity property} if every entry $H_{i,j}$ of $\H$ is chosen from the set
$H_{i,j} \in \tilde{\Fspace} := \left \{ \textstyle\sum_{i=1}^{\lambda} \alpha_i f_i \, : \, \alpha_i \in \Rq^* \cup \{0\} \right\}  \subseteq \Fspace$.
\end{enumerate}
\end{definition}

As its name suggests, the first property is related to unique erasure decoding, \emph{i.e.}, the process of obtaining the full error vector $\e$ after having recovered its support. The next lemma establishes this connection.

\begin{lemma}[Unique Erasure Decoding]\label{lem:erasure_decoding}
Given a parity-check matrix $\H$ that fulfills the {unique-decoding property}. Let $\Espace$ be a free support of dimension $t \leq \tfrac{m}{\lambda}$. If $\dim_{\Rq}(\Espace \Fspace) = \lambda t$, then, for any syndrome $\s \in \Rqm^{n-k}$, there is at most one error vector $\e \in \Rqm^n$ with support $\Espace$ that fulfills
$\H \e^\top = \s^\top$.
\end{lemma}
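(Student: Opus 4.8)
The plan is to convert the single ring equation $\H \e^\top = \s^\top$ into a collection of $t$ independent linear systems over $\Rq$, each governed by the matrix $\H_{\mathrm{ext}}$ from Definition~\ref{def:H_properties}, and then to use the unique-decoding property to show that each such system has at most one solution.

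First I would fix a free basis $\epsilon_1,\dots,\epsilon_t$ of the support $\Espace$ and write every entry of the unknown error as $e_j = \sum_{s=1}^t e_{j,s}\,\epsilon_s$ with $e_{j,s}\in\Rq$; since $\e$ has support $\Espace$, the scalars $e_{j,s}$ carry all the information about $\e$. Next I would expand each syndrome coordinate using the free basis $f_1,\dots,f_\lambda$ of $\Fspace$ together with $H_{i,j}=\sum_{\ell} h_{i,j,\ell} f_{\ell}$, obtaining
\[
s_i=\sum_{j=1}^n H_{i,j} e_j = \sum_{\ell=1}^\lambda \sum_{s=1}^t \Big(\sum_{j=1}^n h_{i,j,\ell}\, e_{j,s}\Big)\, f_\ell\,\epsilon_s .
\]
The products $f_\ell\,\epsilon_s$ all lie in the product module $\Espace\Fspace$, and there are exactly $\lambda t$ of them. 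The hypothesis $\dim_{\Rq}(\Espace\Fspace)=\lambda t$ forces this spanning family to be a free basis of $\Espace\Fspace$, so reading off the coefficient of each $f_\ell\,\epsilon_s$ in the given syndrome is well-defined. Grouping the resulting equations by the index $s$ then yields, for each fixed $s$, the $\Rq$-linear system
\[
\H_{\mathrm{ext}}\,(e_{1,s},\dots,e_{n,s})^\top = \big(c_{i,\ell,s}\big)_{i,\ell},
\]
where $c_{i,\ell,s}$ is the uniquely determined coefficient of $f_\ell\,\epsilon_s$ in $s_i$, and the right-hand side is ordered exactly as the rows of $\H_{\mathrm{ext}}$.

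At this point uniqueness of $\e$ reduces to injectivity of the map $\vec{x}\mapsto \H_{\mathrm{ext}}\,\vec{x}$ over $\Rq$. Here the unique-decoding property enters: I would pass to a Smith normal form $\D = \S\,\H_{\mathrm{ext}}\,\T$ with $\S,\T$ invertible. Because $\lambda\geq \tfrac{n}{n-k}$ the matrix $\H_{\mathrm{ext}}$ has at least $n$ rows, so $\min$ of its dimensions is $n$, and $\frk(\H_{\mathrm{ext}}) = \rk(\H_{\mathrm{ext}}) = n$ means that all $n$ nonzero diagonal entries of $\D$ are in fact units. Since $\S$ and $\T$ are invertible, $\H_{\mathrm{ext}}\,\vec{x}=\0$ is equivalent to $\D(\T^{-1}\vec{x})=\0$, and a diagonal matrix with unit entries has trivial kernel; hence $\H_{\mathrm{ext}}$ is injective. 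Consequently each of the $t$ systems above has at most one solution $(e_{1,s},\dots,e_{n,s})$, and therefore the scalars $e_{j,s}$, and thus $\e$ itself, are uniquely determined by $\s$.

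The step I expect to be most delicate is the passage from ``full rank'' to ``injectivity'' over $\ZZ_{p^r}$. Over a field, full column rank would immediately give injectivity, but over a finite chain ring a matrix can satisfy $\rk=n$ while still having a nontrivial kernel, because a nonzero pivot may be a zero divisor. This is precisely why the unique-decoding property demands the stronger condition $\frk(\H_{\mathrm{ext}}) = \rk(\H_{\mathrm{ext}}) = n$: it guarantees that the pivots in the Smith normal form are units rather than merely nonzero, which is exactly what excludes a second error vector with the same syndrome. A secondary point to justify carefully is the linear independence of the $\lambda t$ products $f_\ell\,\epsilon_s$; I would deduce it from the dimension hypothesis via the fact that a surjection from a finite free $\Rq$-module onto a free module of equal rank is an isomorphism, so that the spanning family of size $\lambda t$ is genuinely a basis and the coordinate extraction is unambiguous.
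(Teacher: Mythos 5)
Your proof is correct and takes essentially the same route as the paper's: the paper likewise expands the error over a basis $\a$ of $\Espace$ as $\e = \a\B$ with $\B \in \Rq^{t \times n}$, reduces $\H\e^\top = \s^\top$ to $\Rq$-linear systems governed by $\H_{\mathrm{ext}}$, and invokes the unique-decoding property to conclude there is at most one solution $\B$, deferring the details to the cited finite-field treatments. Your Smith-normal-form argument that $\frk(\H_{\mathrm{ext}}) = \rk(\H_{\mathrm{ext}}) = n$ forces unit pivots and hence a trivial kernel over $\ZZ_{p^r}$ is exactly the point the paper leaves implicit, so no further changes are needed.
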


\begin{proof}
The proof follows by the same arguments as in \cite[Section~4.5]{aragon2019low} (see also \cite{renner2019efficient} for more details), where the {unique-decoding property} implies that $\H_\mathrm{ext}$ is full-rank and shows that there is at most one solution $\B \in \Rq^{t \times n}$ that solves the linear system of equations $\H \B^\top \a^\top = \s$, where the entries of $\a \in \Rqm^t$ form a basis of $\Espace$.
\end{proof}

In the original papers about LRPC codes over finite fields, \cite{gaborit2013low,aragon2019low}, some of the properties of Definition~\ref{def:H_properties} are used without explicitly stating them.

The {unique-decoding property} is necessary to obtain a unique decoding result after recovering the support of the error. Hence, the property is also necessary for the decoder in \cite{gaborit2013low} to return a unique decoding result. In practice, this property is not very restrictive: for  entries $H_{i,j}$ chosen uniformly at random from $\Fspace$, this property is fulfilled with the probability that a random $\lambda (n-k) \times n$ matrix has {full (free) rank $n$, which again is arbitrarily close to $1$ for increasing $\lambda(n-k)-n$ (cf.~\cite{renner2019efficient} for the field and Lemma~\ref{lem:syndrome_number_of_full-rank_matrices} for the ring case).} 

We will use the {maximal-row-span property} to prove a bound on the failure probability of the decoder in Section~\ref{sec:failure}. It is a sufficient condition that our bound (in particular Theorem~\ref{thm:syndrome_condition_main_statement} in Section~\ref{sec:failure}) holds. Although not explicitly stated, \cite[Proposition~4.3]{aragon2019low} must also assume a similar or slightly weaker condition in order to hold. It does not hold for arbitrary parity-check matrices as in \cite[Definition~4.1]{aragon2019low} (see Remark~\ref{rem:necessity_of_maximal_row_span_condition_or_similar} in Section~\ref{sec:failure}). This is again not a big limitation in general for two reasons: first, the ideal codes in \cite[Definition~4.2]{aragon2019low} appear to automatically have this property, and second, a random parity-check matrix has this property with high probability.

In the case of finite fields, the {unity property} is no restriction at all since the units of a finite field are all non-zero elements. That is, we have $\tilde{\Fspace} = \Fspace$. Over rings, we need this additional property as a sufficient condition for one of our failure probability bounds (Theorem~\ref{thm:syndrome_condition_main_statement} in Section~\ref{sec:failure}). It is not a severe restriction in general, since $|\tilde{\Fspace}| = (|\Rq^*|+1)^\lambda = (\tfrac{q}{p}+1)^\lambda = (p^{r-1}+1)^\lambda$ compared to $|\Fspace| = q^\lambda = p^{r\lambda}$.

\section{Decoding}

Fix $\lambda$ and $\Fspace$ as in Definition~\ref{def:LRPCcodes}.
Let $f_1,\dots,f_\lambda \in \Rqm$ be a free basis of $\Fspace$.
Note that since the $f_i$ are linearly independent, the sets $\{f_i\}$ are linearly independent, which by the above discussion implies that all the $f_i$ are units in $\Rqm$. Hence, $f_i^{-1}$ exists for each $i$.

\begin{algorithm}
\DontPrintSemicolon
\caption{LRPC Decoder}
\label{alg:decoder}
\KwIn{\begin{itemize}
\item LRPC parity-check matrix $\H$ (as in Definition~\ref{def:LRPCcodes})
\item $\r = \c + \e$, such that
\begin{itemize}
\item $\c$ is in the LRPC code $\mathcal{C}$ given by $\H$ and
\item The support of $\e$ is a free module of dimension $t$. 
\end{itemize}
\end{itemize}}
\KwOut{Codeword $\c'$ of $\mathcal{C}$ or ``decoding failure''}
$\s = [s_1,\dots,s_{n-k}] \gets \r \H^\top$ \\
$\Sspace \gets \langle s_1,\dots,s_{n-k} \rangle_{\Rq}$ \label{line:Sspace} \\
\If{$\dim_{\Rq} \Sspace < \lambda t$}{
\Return{``decoding failure''}
}
\For{$i=1,\dots,\lambda$}{
$\Sspace_i \gets f_i^{-1} \Sspace = \left\{f_i^{-1} s \, : \, s \in \Sspace \right\}$ \\
}
$\Espace' \gets \bigcap_{i=1}^{\lambda} \Sspace_i$ \\
\If{$\dim_{\Rq} \Espace' > t$ or $\dim_{\Rq} (\Espace' \Fspace) < \lambda t$}{
\Return{``decoding failure''}
}
$\e \gets$ Erasure decoding with support $\Espace'$ w.r.t.~the syndrome $\s$, as described in Lemma~\ref{lem:erasure_decoding} (analogous to \cite[Section~4.5]{aragon2019low} or \cite[Section~III.B]{renner2019efficient}) \\
\Return{$\r-\e$}
\end{algorithm}

The following theorem states precisely under which conditions on the error support and parity-check matrix space $\Fspace$ the decoder (Algorithm~\ref{alg:decoder}) returns the transmitted codeword.
For fixed $\Fspace$ and random errors of a given weight $t$, we study the probability of failure (\emph{i.e.}, the probability that the conditions are not fulfilled) in Section~\ref{sec:failure}.

\begin{theorem}\label{thm:decoder_correctness}
Let $\H$ be chosen as in Definition~\ref{def:LRPCcodes} such that it has the {unique-decoding property} (cf.~Definition~\ref{def:H_properties}).
Then, Algorithm~\ref{alg:decoder} returns the correct codeword $\c$ if the following three conditions are fulfilled:
\begin{enumerate}
\item \label{itm:fail_Sspace_small} 	$\dim_{\Rq} \Sspace = \lambda t$, \hfill (\textbf{syndrome condition}),
\item \label{itm:fail_intersection_big} $\dim_{\Rq} \left( \bigcap_{i=1}^{\lambda} \Sspace_i \right) = t$, \hfill (\textbf{intersection condition}),
\item \label{itm:fail_product_small} 	$\dim_{\Rq} \left( \Espace \Fspace \right) = \lambda t$, \hfill (\textbf{product condition}).	
\end{enumerate}
\end{theorem}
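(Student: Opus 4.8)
The plan is to follow the structure of the field-case correctness proof of \cite{gaborit2013low,aragon2019low}, replacing every ``equal dimension implies equality'' step by a module-theoretic argument valid over the chain ring $\Rq$. First I would reduce the syndrome to the error: since $\c$ lies in the code defined by $\H$ we have $\c\H^\top=\0$, hence $\s=\r\H^\top=\e\H^\top$ and $s_i=\sum_{j=1}^n e_j H_{i,j}$ for each $i$. Writing a free basis $a_1,\dots,a_t$ of the error support $\Espace$ and the free basis $f_1,\dots,f_\lambda$ of $\Fspace$, each $e_j$ is an $\Rq$-combination of the $a_u$ and each $H_{i,j}$ an $\Rq$-combination of the $f_\ell$, so every $s_i$ is an $\Rq$-combination of the products $a_u f_\ell$. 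This gives the inclusion $\Sspace\subseteq\Espace\Fspace$.

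Next I would upgrade this inclusion to the identity $\Sspace=\Espace\Fspace$. The $\lambda t$ generators $a_u f_\ell$ are products of units, since by Lemma~\ref{lem:Rqm_units} a basis element is linearly independent and therefore a unit. Hence the \textbf{product condition} $\dim_{\Rq}(\Espace\Fspace)=\lambda t$ forces these $\lambda t$ generators to be linearly independent, i.e.\ $\Espace\Fspace$ is free of rank $\lambda t$. Combined with the \textbf{syndrome condition} $\dim_{\Rq}\Sspace=\lambda t$ and the inclusion $\Sspace\subseteq\Espace\Fspace$, a length (Smith-normal-form) argument then yields $\Sspace=\Espace\Fspace$, because a submodule of a free module of the same rank attains the full length $r\lambda t$ only if it is the whole module. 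Once this is established, for every $i$ we have $f_i\Espace\subseteq\Fspace\Espace=\Espace\Fspace=\Sspace$, and multiplying by the unit $f_i^{-1}$ gives $\Espace\subseteq f_i^{-1}\Sspace=\Sspace_i$; intersecting over $i$ yields $\Espace\subseteq\Espace'=\bigcap_{i=1}^\lambda\Sspace_i$.

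It then remains to collapse this inclusion to $\Espace'=\Espace$ and to invoke erasure decoding. Here $\Espace$ is free of rank $t$ and so has length $rt$, while the \textbf{intersection condition} $\dim_{\Rq}\Espace'=t$ bounds the length of $\Espace'$ by $rt$; since $\Espace\subseteq\Espace'$, the two lengths coincide and the inclusion is an equality, so $\Espace'=\Espace$ is the true free support. Finally, the {unique-decoding property} together with $\dim_{\Rq}(\Espace\Fspace)=\lambda t$ lets me apply Lemma~\ref{lem:erasure_decoding}: there is at most one vector of support $\Espace'=\Espace$ with syndrome $\s$, and since $\e$ is such a vector the erasure step returns exactly $\e$, whence the decoder outputs $\r-\e=\c$. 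One also checks directly that, under the three conditions, none of the failure tests in Algorithm~\ref{alg:decoder} trigger.

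The main obstacle is the two module-equality steps $\Sspace=\Espace\Fspace$ and $\Espace'=\Espace$. Over a field these are immediate from equality of dimensions, but over $\Rq=\ZZ_{p^r}$ a proper submodule can share the rank of its ambient module (for instance $p\Rq\subsetneq\Rq$), so equal dimension is not enough. The care therefore lies in exploiting freeness: the product condition must genuinely make $\{a_u f_\ell\}$ a basis, so that $\Espace\Fspace$ is free and length arguments apply, and the intersection condition must be read as a rank bound that, via the length inequality for rank-$t$ modules over $\Rq$, squeezes $\Espace'$ down onto the free module $\Espace$.
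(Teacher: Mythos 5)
Your proposal is correct and follows essentially the same route as the paper's proof: show $\Sspace \subseteq \Espace\Fspace$ and upgrade to equality via the syndrome condition, deduce $\Espace \subseteq \bigcap_{i=1}^{\lambda}\Sspace_i$ and upgrade to equality via the intersection condition, then finish with unique erasure decoding (Lemma~\ref{lem:erasure_decoding}) using the product condition and the unique-decoding property. The extra module-theoretic care you supply (length/Smith-normal-form arguments in place of the field-case dimension count) is precisely what the paper's one-line equality claims implicitly rely on, and your treatment of the intersection step --- free $\Espace$ of length $rt$ inside a module $\Espace'$ with at most $t$ invariant factors, each of length at most $r$ --- is airtight.

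One caveat on the syndrome step: if $\dim_{\Rq}\Sspace=\lambda t$ is read as rank (number of nonzero invariant factors), then ``same rank forces full length'' is not valid: $p\,\Espace\Fspace$ is a proper submodule of the free module $\Espace\Fspace$ with the same rank $\lambda t$ but length only $(r-1)\lambda t$ --- exactly the phenomenon $p\Rq\subsetneq\Rq$ that you yourself flag, and freeness of the \emph{ambient} product module does not exclude it. Your length argument closes once the syndrome condition is read as \emph{free} rank $\lambda t$ (then $\Sspace$ contains a free rank-$\lambda t$ submodule, hence has length at least $r\lambda t$, forcing $\Sspace=\Espace\Fspace$). This reading is the one consistent with the rest of the paper --- Theorem~\ref{thm:syndrome_condition_main_statement} characterizes $\Sspace=\Espace\Fspace$ via the syndrome coefficient matrix having full \emph{free} rank --- and the paper's own proof (``due to the syndrome condition, we have equality'') glosses this same point without comment, so this is an imprecision you share with, rather than a divergence from, the paper.
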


\begin{proof}
In Line~\ref{line:Sspace}, Algorithm~\ref{alg:decoder} computes the module spanned by the syndrome. Since the syndromes are sums of products of error and parity-check matrix entries, the syndrome space $\Sspace$ is a subset of the product space $\Espace \Fspace$. Due to the {syndrome condition}, we have equality, \emph{i.e.},
$\Sspace = \Espace \Fspace$.

By the definition of $\Sspace_1,\hdots,\Sspace_\lambda$, we have that $\Espace \subseteq \Sspace_i$ for $i=1\hdots,\lambda$ and thus $\Espace \subseteq \bigcap_{i=1}^{\lambda} \Sspace_i$.
Due to the {intersection condition}, the space $\bigcap_{i=1}^{\lambda} \Sspace_i$ cannot be larger than $\Espace$ and we have equality, i.e,
$\Espace = \bigcap_{i=1}^{\lambda} \Sspace_i$.
The {product condition} on the error, together with the {unique-decoding property} of the parity-check matrix ensures that we can recover uniquely the error vector $\e$ from its support (cf.~Lemma~\ref{lem:erasure_decoding}).
\end{proof}

\begin{remark}
Note that the conditions in Theorem~\ref{thm:decoder_correctness} imply that $\lambda t \leq m$ (due to the {product condition}) as well as $\lambda \geq \tfrac{n}{n-k}$ (due to the {unique-decoding property}). Combined, we obtain
$t \leq m\tfrac{n-k}{n} = m(1-R)$,
where $R := \tfrac{k}{n}$ is the rate of the LRPC code.
\end{remark}

\section{Failure Probability}\label{sec:failure}

\subsection{Failure of Product Condition}

\begin{lemma}\label{lem:product_space_full_dimension_probability_induction}
Let $A',B$ be free submodules of $\Rqm$ of dimension $\alpha'$ and $\beta$, respectively, such that also $A'B$ is a free submodule of $\Rqm$ of dimension $\alpha'\beta$. For an element $a \in \Rqm^\ast$, chosen uniformly at random, let $A := A' + \langle a \rangle$. Then, we have
\begin{align*}
&\Pr\big(\text{$AB$ is a free module of dimension $\alpha'\beta+\beta$}\big) \\
&\geq 1- \textstyle \sum_{j = 0}^{r-1} \Big[(q/p^j)^\beta-(q/p^{j+1})^\beta\Big] \left(q/p^j\right)^{\alpha' \beta-m}.
\end{align*}
\end{lemma}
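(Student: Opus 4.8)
The plan is to convert the free-rank statement about $AB$ into a divisibility condition that can be attacked by a union bound. The key computational tool is reduction modulo the maximal ideal: the map $\rho\colon\Rqm\to\Rqm/p\Rqm\cong\mathbb{F}_{p^m}$ is a ring epimorphism, and for any submodule $M\subseteq\Rqm$ one has $\frk(M)=\dim_{\mathbb{F}_p}\rho(M)$ (immediate from the Smith normal form). Hence a submodule generated by $s$ elements is free of rank $s$ iff its image under $\rho$ has $\mathbb{F}_p$-dimension $s$. Applied to $AB=A'B+aB$ (where $aB=\langle a\rangle B$), generated by a free basis of $A'B$ together with $ab_1,\dots,ab_\beta$ for a free basis $b_1,\dots,b_\beta$ of $B$, and using that $a\in\Rqm^*$ makes $\rho(a)\neq0$ and $aB$ free of rank $\beta$, the target event becomes $\dim_{\mathbb{F}_p}\rho(AB)=\alpha'\beta+\beta$, equivalently $\rho(aB)\cap\rho(A'B)=\{0\}$. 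Throughout I use that the free module $A'B$ is a direct summand, $\Rqm=A'B\oplus W$ with $W$ free of rank $m-\alpha'\beta$, and the valuation decomposition of Lemma~\ref{lem:Rqm_decomposition}.

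The heart of the argument is a ring-theoretic reformulation of failure: $AB$ is \emph{not} free of rank $\alpha'\beta+\beta$ iff there is some $b\in B\setminus\{0\}$ with $ab\in A'B$. Indeed, if the intersection above is nonzero we get $b_0\in B$ with $\rho(b_0)\neq0$ and $ab_0\in A'B+p\Rqm$; then $b:=p^{\,r-1}b_0\neq0$ satisfies $ab=p^{\,r-1}ab_0\in p^{\,r-1}A'B+p^{\,r}\Rqm=p^{\,r-1}A'B\subseteq A'B$. Conversely, writing a witness as $b=p^{\,j}b^*$ with $b^*\in\Rqm^*$ and $0\le j\le r-1$ (Lemma~\ref{lem:Rqm_decomposition}), decomposing $ab^*$ along $A'B\oplus W$ and using $p^{\,j}(ab^*)\in A'B$ forces its $W$-component into $\{w\in W:p^{\,j}w=0\}=p^{\,r-j}W\subseteq p\Rqm$; thus $\rho(ab^*)\in\rho(A'B)$ with $\rho(b^*)\neq0$, which is failure. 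This equivalence is exactly what manufactures the valuation-indexed sum.

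With failure so characterized, the union bound over $b\in B\setminus\{0\}$ gives
\[
\Pr(\text{failure})\ \le\ \sum_{b\in B\setminus\{0\}}\Pr\!\big(ab\in A'B\big)\ =\ \sum_{j=0}^{r-1}\#\{b\in B:\operatorname{val}(b)=j\}\cdot P_j,
\]
where $\operatorname{val}(b)=j$ is the exponent of Lemma~\ref{lem:Rqm_decomposition} and $P_j:=\Pr_{u\in\Rqm^*}(p^{\,j}u\in A'B)$ depends only on $j$, since writing $b=p^{\,j}b^*$ with $b^*$ a unit makes $ab^*$ uniform on $\Rqm^*$ as $a$ is. I would then evaluate the two factors: $\#\{b\in B:\operatorname{val}(b)=j\}=(q/p^j)^\beta-(q/p^{j+1})^\beta$ from $|p^{\,j}B|=(q/p^j)^\beta$, and $P_j$ from $(A'B:p^{\,j})=A'B\oplus p^{\,r-j}W$, whose unit count is $[q^{\alpha'\beta}-(q/p)^{\alpha'\beta}](p^j)^{m-\alpha'\beta}$; dividing by $|\Rqm^*|=q^m(1-p^{-m})$ gives $P_j=(q/p^j)^{\alpha'\beta-m}\cdot\frac{1-p^{-\alpha'\beta}}{1-p^{-m}}\le(q/p^j)^{\alpha'\beta-m}$, the final inequality using $\alpha'\beta\le m$. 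Substituting reproduces the claimed bound.

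I expect the failure characterization of the second paragraph to be the main obstacle. Its ``only if'' direction is subtle because $ab_0\in A'B+p\Rqm$ is strictly weaker than $ab_0\in A'B$; the device of multiplying by $p^{\,r-1}$ to absorb the $p\Rqm$-error is what makes it go through, and the converse requires controlling the $p^{\,r-j}$-torsion of the complement $W$. Both directions rely on $A'B$ being a free direct summand and on Lemma~\ref{lem:Rqm_decomposition}, so I would isolate these structural facts first. It is worth noting that passing through $\rho$ directly would even yield the sharper estimate $\Pr(\text{failure})\le(p^\beta-1)(p^{\alpha'\beta}-1)/(p^m-1)$; the bound in the statement is the coarser, ring-intrinsic one coming from the union bound over all of $B\setminus\{0\}$.
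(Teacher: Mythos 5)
Your proof is correct, and although its skeleton matches the paper's---reduce the event ``$AB$ is not free of rank $\alpha'\beta+\beta$'' to the existence of a witness $b\in B\setminus\{0\}$ with $ab\in A'B$, union-bound over $B\setminus\{0\}$ stratified by the valuation $j$ of $b$, and count $(q/p^j)^\beta-(q/p^{j+1})^\beta$ elements of valuation $j$ in $B$---the two technical steps are handled genuinely differently. For the witness characterization, the paper stays inside the ring: $AB$ is free of rank $\alpha'\beta+\beta$ iff $aB\cap A'B=\{0\}$, via linear independence of $ab_1,\dots,ab_\beta$ over $A'B$; you instead pass through reduction modulo $p\Rqm$, using $\frk(M)=\dim_{\mathbb{F}_p}\rho(M)$ together with the direct-summand decomposition $\Rqm=A'B\oplus W$. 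The real divergence is in evaluating the per-witness probability: since $\Pr(ab\in A'B)$ is awkward for $a$ uniform on $\Rqm^*$, the paper enlarges the sample space to $c$ uniform on all of $\Rqm$ (legitimate because for non-unit $c$ the witness event holds automatically), after which $cb^*p^j$ is uniform on the ideal $p^j\Rqm$ and the probability is $\left|p^j\Rqm\cap A'B\right|/\left|p^j\Rqm\right|=(q/p^j)^{\alpha'\beta-m}$. You keep $a$ uniform on the units and compute $P_j$ exactly from the colon module $(A'B:p^j)=A'B\oplus p^{r-j}W$, getting $P_j=(q/p^j)^{\alpha'\beta-m}\cdot\frac{1-p^{-\alpha'\beta}}{1-p^{-m}}\le(q/p^j)^{\alpha'\beta-m}$; this is slightly sharper, makes visible exactly what the paper's relaxation to non-units gives away, and your closing remark that working entirely mod $p$ yields the stronger bound $(p^\beta-1)(p^{\alpha'\beta}-1)/(p^m-1)$ is also correct. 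One glossed point: in the ``only if'' direction of your failure characterization, Lemma~\ref{lem:Rqm_decomposition} supplies $b=p^jb^*$ with $b^*\in\Rqm^*$ but \emph{not} with $b^*\in B+p\Rqm$, which you implicitly need to place $\rho(ab^*)$ inside $\rho(aB)$; this is repaired in one line by factoring $p^j$ out of the coordinates of $b$ in a free basis of $B$, giving a unit $b'\in B$ with $b=p^jb'$ and $\rho(b^*)=\rho(b')$ since $b^*\equiv b'\pmod{p^{r-j}\Rqm}$ with $r-j\ge 1$. In any case only the forward implication feeds your union bound, so the stated inequality is unaffected.
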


\begin{proof}
First note that since $a$ is a unit in $\Rqm$, the mapping
$\varphi_a \, : \, B \to \Rqm, ~
b \mapsto ab$
is injective. This means that $aB$ is a free module of dimension $\dim_{\Rq}(aB)=\dim_{\Rq}(B)=\beta$. Let $b_1,\dots,b_\beta$ be a basis of $B$.
Then, $a b_1, \dots, a b_\beta$ is a basis of $aB$.

Hence, $AB$ is a free module of dimension $\dim(AB) = \alpha \beta+\beta$ if and only if all the elements $a b_1, \dots, a b_\beta$ are linearly independent of $A'B$. This again holds if and only if
$\sum_{i=1}^{\beta} \lambda_i a b_i \notin A'B \quad \forall \lambda_i \in \Rq, \text{ not all $0$}$.
This is equivalent to
$aB \cap A'B = \{0\}$.
Hence,
\begin{equation}
\Pr \! \big( \! \dim \! AB \! \not = \! \alpha'\beta \!+ \!\beta \big) \leq \Pr\left( \exists b \in \! B  \setminus \{0\} \! : ab \in A'B \right). \label{eq:product_inequality_1}
\end{equation}
Let $c$ be chosen uniformly at random from $\Rqm$. Recall that $a$ is chosen uniformly at random from $\Rqm^*$. Then,
\begin{equation}
\Pr \! \left(  \exists b \! \in \! B \! \setminus \! \{0\} \! : \! ab \! \in \! A' \!B  \! \right)
\! \leq \! \Pr \! \left( \exists b \! \in \! B \! \setminus \! \{0\} \! : \! cb \! \in \! A'\!B  \right). \label{eq:product_inequality_2}
\end{equation}
This holds since if $c$ is chosen to be a non-unit in $\Rqm$, then the statement ``$\exists \, b \in B \setminus \{0\} \, : \, cb \in A'B$'' is always true.
To see this, write $c = p c'$ for some $c' \in \Rqm$. Since $\beta>0$, there is a unit $b^* \in B \cap \Rqm^*$.
Choose $b := p^{r-1}b^* \in B\setminus \{0\}$. Hence, $c b = p c' p^{r-1}b^* = 0$, and $b$ is from $B$ and non-zero.

Now we bound the right-hand side of \eqref{eq:product_inequality_2} as follows
\begin{align*}
\Pr\left( \exists  b \in \! B \! \setminus \! \{0\} \! : \! cb \in \! A'B \right) \! &\leq \textstyle \sum_{b \in B \setminus \{0\}} \! \Pr \left( cb \in A'B \right) \\
&= \! \sum_{j = 0}^{r-1}  \sum_{b \in B :  j_b = j} \! \Pr\left( cb^* p^{j} \! \in \! A'B \right).
\end{align*}
Since $b^*$ is a unit in $\Rqm$, for uniformly drawn $c$, $c b^*$ is also uniformly distributed on $\Rqm$. Hence, $cb^* p^{j}$ is uniformly distributed on the ideal $p^{j}\Rqm$ of $\Rqm$ generated by $p^{j}$ and we have
$\Pr\left( cb^* p^{j} \in A'B \right) = \frac{\left| p^{j}\Rqm \cap A'B \right|}{|p^{j}\Rqm|}$.
Let $v_1,\dots,v_{\alpha'\beta}$ be a basis of $A'B$. Then, an element $c \in A'B$ is in $p^{j}\Rqm$ if and only if it can be written as
$c = \sum_{i} \mu_i v_i$,
where $\mu_i \in p^j \Rq$ for all $i$. This is true due to the following argument: Assume not. Then there is a non-empty set $\mathcal{I} \subseteq \{1,\dots,\alpha\beta\}$ such that $\mu_i \notin p^j \Rq$ for all $i \in \mathcal{I}$ and $\mu_i \in p^j \Rq$ for all $i \notin \mathcal{I}$. Note that this implies $p^{r-j}\mu_i \neq 0$ if and only if $i \in \mathcal{I}$. Hence,
$0 = p^{r-j}c = \sum_{i \in \mathcal{I}} p^{r-j}\mu_i v_i$.
However, this contradicts the fact that the $v_i$ are linearly independent since all the $p^{r-j}\mu_i$ are in $\Rq$, but not zero.

Hence, 
$\left| p^{j}\Rqm \cap A'B \right| = |p^{j}\Rq|^{\alpha' \beta}$.
Furthermore, we have
$|p^{j}\Rqm| = |p^{j}\Rq|^m$,
where
$|p^{j}\Rq| = q/p^j$.
Overall, we get
\begin{align}
&\Pr\left( \exists \, b \in  B  \setminus \{0\} \, : \, cb  \in  A'B \right) \notag \\
&\leq \textstyle \sum_{j = 0}^{r-1} \textstyle \sum_{b \in B \, : \, j_b = j} \left(q/p^j\right)^{\alpha' \beta-m} \notag \\ 
&= \textstyle \sum_{j = 0}^{r-1} \big|\{b \in B \, : \, j_b = j\}\big| \left(q/p^j\right)^{\alpha' \beta-m}. \label{eq:product_inequality_3}
\end{align}
Furthermore, we have (note that $p^{j+1}\Rqm \subseteq p^{j}\Rqm$)
\begin{align}
\big|\{b \in B \, : \, j_b = j\}\big| &= \Big|\big(p^{j}\Rqm \setminus p^{j+1}\Rqm\big) \cap B \Big| \notag\\
&= \big|p^{j}\Rqm \cap B \big| - \big|p^{j+1}\Rqm \cap B \big| \notag \\
&= (q/p^j)^\beta-(q/p^{j+1})^\beta. \label{eq:product_inequality_4}
\end{align}
Combining \eqref{eq:product_inequality_1}, \eqref{eq:product_inequality_2}, \eqref{eq:product_inequality_3}, and \eqref{eq:product_inequality_4} gives the result.
\end{proof}
\begin{lemma}\label{lem:product_space_full_dimension_probability}
Let $B$ be a fixed free submodule of $\Rqm$ of dimension $\beta$. Furthermore, let $A$ be drawn uniformly at random from the set of free submodules of $\Rqm$ of dimension $\alpha$.
Then,
\vspace{-0.005cm}
\begin{align*}
&\Pr\big(\text{$AB$ is a free module of dimension $\alpha\beta$}\big) \\
&\geq 1- \alpha \textstyle \sum_{j = 0}^{r-1} \Big[(q/p^j)^\beta-(q/p^{j+1})^\beta\Big] \left(q/p^j\right)^{\alpha \beta-m}.
\end{align*}
\end{lemma}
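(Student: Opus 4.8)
The plan is to prove the bound by induction on $\alpha$, realizing $A$ as the span of a random free basis and peeling off one generator at a time so that Lemma~\ref{lem:product_space_full_dimension_probability_induction} can be applied at each step, with the per-step failure probabilities combined by a union bound.

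Concretely, I would realize the uniformly random free submodule $A$ as $A=\langle a_1,\dots,a_\alpha\rangle_{\Rq}$, where $(a_1,\dots,a_\alpha)$ is a uniformly random ordered free $\Rq$-basis; since every free submodule of dimension $\alpha$ admits exactly $|\mathrm{GL}_\alpha(\Rq)|$ ordered bases, this makes $A$ uniform over all such submodules. Writing $A_i:=\langle a_1,\dots,a_i\rangle_{\Rq}$ (free of dimension $i$) and letting $E_i$ denote the event ``$A_iB$ is not a free module of dimension $i\beta$'', the goal becomes bounding $\Pr(E_\alpha)$. First I would observe that the events are nested, $E_1\subseteq\dots\subseteq E_\alpha$: if $A_{i-1}B$ is already not free of dimension $(i-1)\beta$, then, since $A_{i-1}B\subseteq A_iB$ and $\dim_{\Rq}(A_iB)\le\dim_{\Rq}(A_{i-1}B)+\beta$, neither is $A_iB$ free of dimension $i\beta$. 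Hence $E_\alpha$ splits as the disjoint union of the first-failure events $E_i\setminus E_{i-1}$ (with $E_0=\emptyset$), giving $\Pr(E_\alpha)=\sum_{i=1}^{\alpha}\Pr(E_i\cap E_{i-1}^{\,c})$, and it suffices to bound each term by conditioning on a prefix $(a_1,\dots,a_{i-1})$ for which $A_{i-1}B$ is free of dimension $(i-1)\beta$.

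The main obstacle is that, conditioned on such a prefix, $a_i$ is uniform over the set $F$ of \emph{free extensions} of $A_{i-1}$ (those $x$ with $A_{i-1}+\langle x\rangle_{\Rq}$ free of dimension $i$, which is the correct conditional law in the uniform ordered-basis model), whereas Lemma~\ref{lem:product_space_full_dimension_probability_induction} supplies a bound only for $a$ uniform over all of $\Rqm^\ast$. I would bridge this gap through the inclusions $G\subseteq F\subseteq\Rqm^\ast$, where $G$ is the ``good'' set of $a\in\Rqm^\ast$ with $A_{i-1}B+aB$ free of dimension $i\beta$. The inclusion $F\subseteq\Rqm^\ast$ holds because any member of a free basis is, as a singleton, linearly independent, hence a unit by Lemma~\ref{lem:Rqm_units}; and $G\subseteq F$ holds because $\dim_{\Rq}(A_{i-1}B+aB)=i\beta$ forces $\dim_{\Rq}(A_{i-1}+\langle a\rangle_{\Rq})=i$ via $\dim_{\Rq}(A_iB)\le\dim_{\Rq}(A_i)\cdot\beta$. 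Since Lemma~\ref{lem:product_space_full_dimension_probability_induction} (with $\alpha'=i-1$) states $|G|\ge|\Rqm^\ast|(1-\delta_{i-1})$ for $\delta_{i-1}$ its right-hand failure term, and since $G\subseteq F\subseteq\Rqm^\ast$ implies $|F|\le|\Rqm^\ast|$, the conditional success probability is $|G|/|F|\ge|\Rqm^\ast|(1-\delta_{i-1})/|F|\ge 1-\delta_{i-1}$; thus the uniform-unit bound transfers verbatim to the uniform-free-extension distribution, and $\Pr(E_i\cap E_{i-1}^{\,c})\le\delta_{i-1}$.

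Finally I would sum: $\Pr(E_\alpha)\le\sum_{i=1}^{\alpha}\delta_{i-1}$. Because $q/p^j=p^{r-j}\ge1$ for all $j\le r-1$ and each bracket $(q/p^j)^\beta-(q/p^{j+1})^\beta$ is nonnegative, the factor $(q/p^j)^{(i-1)\beta-m}$ in $\delta_{i-1}$ is at most $(q/p^j)^{\alpha\beta-m}$, so every $\delta_{i-1}$ is bounded by the single expression $\sum_{j=0}^{r-1}[(q/p^j)^\beta-(q/p^{j+1})^\beta](q/p^j)^{\alpha\beta-m}$. Pulling the resulting factor $\alpha$ out of the sum and passing to the complementary event yields exactly the claimed inequality. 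I expect the only genuinely delicate point to be the conditioning argument of the third paragraph; the nesting and the final exponent comparison are routine once the inclusion chain $G\subseteq F\subseteq\Rqm^\ast$ is in place.
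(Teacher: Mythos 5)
Correct, and essentially the same proof as the paper's: the paper likewise realizes $A$ by iteratively adjoining generators $a_i$ drawn uniformly from the elements linearly independent of $A_{i-1}$, invokes Lemma~\ref{lem:product_space_full_dimension_probability_induction} at each step (your disjoint first-failure decomposition is just the unrolled form of the paper's recursion for $\Pr\big(\dim(A_iB)<i\beta\big)$), and relaxes each exponent $(i-1)\beta-m$ to $\alpha\beta-m$ exactly as you do. Your inclusion chain $G\subseteq F\subseteq\Rqm^\ast$ in fact makes explicit the conditioning step the paper asserts in a single line; the only point to tighten is that $G\subseteq F$ needs a cardinality/freeness argument (e.g.\ $|A_iB|\le |A_i|^{\beta}<q^{i\beta}$ whenever $a_1,\dots,a_i$ are linearly dependent) rather than the rank inequality $\dim_{\Rq}(A_iB)\le \dim_{\Rq}(A_i)\,\beta$ alone, since rank $i$ does not by itself force $A_i$ to be free over $\Rq$.
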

\vspace{-0.15cm}
\begin{proof}
Drawing a free submodule $A \subseteq \Rqm$ of dimension $\alpha$ uniformly at random is equivalent to drawing iteratively
$A_0 := \{0\}, ~
A_i := A_{i-1} + \langle a_i \rangle$ for $i=1,\dots,\alpha$
where for each iteration $i$, the element $a_i \in \Rqm$ is chosen uniformly at random from the set of vectors that are linearly independent of $A_{i-1}$. The equivalence of the two random experiments is clear since the possible choices of the sequence $a_1,\dots,a_\alpha$ gives exactly all bases of free $\Rq$-submodules of $\Rqm$ of dimension $\alpha$. Furthermore, all sequences are equally likely and each resulting submodule has the same number of bases that generate it (which equals the number of invertible $\alpha \times \alpha$ matrices over $\Rq$).
We have the following recursive formula for any $i=1,\dots,\alpha$:
\begin{align*}
&\Pr\big( \dim(A_i B)< i \beta \big) \\
&= \Pr\big( \dim(A_i B)< i \beta \land \dim(A_{i-1}B)=(i-1)\beta\big) \\
&\quad + \underbrace{\Pr\big( \dim(A_i B)< i \beta \land \dim(A_{i-1}B)<(i-1)\beta\big)}_{\text{$\dim(A_{i-1}B)<(i-1)\beta$ implies $\dim(A_i B)< i \beta$}} \\
&= \underbrace{\Pr\big( \dim(A_i B)< i \beta  \mid \dim(A_{i-1}B)=(i-1)\beta\big)}_{\overset{(\ast)}{\leq} \, \sum_{j = 0}^{r-1} \Big[(q/p^j)^\beta-(q/p^{j+1})^\beta\Big] \left(q/p^j\right)^{(i-1) \beta-m}} \\ &\quad \quad \cdot \underbrace{\Pr(\dim(A_{i-1}B)=(i-1)\beta)}_{\leq 1} \\
& \quad + \Pr\big(\dim(A_{i-1}B)<(i-1)\beta\big) \\
&\leq \textstyle \sum_{j = 0}^{r-1}\Big[(q/p^j)^\beta-(q/p^{j+1})^\beta\Big] \left(q/p^j\right)^{(i-1) \beta-m} \\
&\quad + \Pr\big(\dim(A_{i-1}B)<(i-1)\beta\big),
\end{align*}
where ($\ast$) follows from Lemma~\ref{lem:product_space_full_dimension_probability_induction} by the following additional argument:
\begin{align*}
&\Pr\big( \dim(A_i B)< i \beta  \mid \\
&\quad  \quad \dim(A_{i-1}B)=(i-1)\beta \land a_i \text{ l.i.\ of $A_{i-1}$}\big) \\
&\leq \Pr\big( \dim(A_i B)< i \beta  \mid \\
& \quad \quad \dim(A_{i-1}B)=(i-1)\beta \land a_i \text{ uniformly from } \Rqm^* \big) \\
&\leq \sum_{j = 0}^{r-1} \Big[(q/p^j)^\beta-(q/p^{j+1})^\beta\Big] \left(q/p^j\right)^{(i-1) \beta-m},
\end{align*}
where the last inequality is exactly the statement of Lemma~\ref{lem:product_space_full_dimension_probability_induction}.

By $\Pr\big(\dim(A_{0}B)<0\big) = 0$, we get
\begin{align*}
&\Pr\left( \dim(AB)< \alpha \beta \right) \\
&= \Pr\big( \dim(A_\alpha B)< \alpha \beta \big) \\
&= \textstyle \sum_{i=1}^{\alpha} \textstyle \sum_{j = 0}^{r-1} \Big[(q/p^j)^\beta-(q/p^{j+1})^\beta\Big] \left(q/p^j\right)^{(i-1) \beta-m} \\
&\leq \alpha \textstyle \sum_{j = 0}^{r-1} \Big[(q/p^j)^\beta-(q/p^{j+1})^\beta\Big] \left(q/p^j\right)^{\alpha \beta-m}.
\end{align*}
This proves the claim.
\end{proof}

The following theorem follows directly from the previous lemma by choosing $A$ to be the random error support of dimension $t$ and $B$ to be the fixed submodule $\Fspace$ of dimension~$\lambda$.

\begin{theorem}\label{thm:product_condition_main_statement}
Let $\Fspace$ be defined as in Definition~\ref{def:LRPCcodes}. Let $t$ be a positive integer with $t \lambda <m$ and let $\Espace$ be the support of an  error word $\e$ chosen uniformly at random among all error words with free support of dimension $t$.
Then, the probability that the {product condition} is not fulfilled is
\begin{align*}
&\Pr\left( \dim_{\Rq} \left( \Espace \Fspace \right) < \lambda t \right) \\
&\leq t \textstyle \sum_{j = 0}^{r-1} \Big[(q/p^j)^\lambda-(q/p^{j+1})^\lambda\Big] \left(q/p^j\right)^{\lambda t-m}.
\end{align*}
\end{theorem}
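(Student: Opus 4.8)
The plan is to derive Theorem~\ref{thm:product_condition_main_statement} as a direct specialization of Lemma~\ref{lem:product_space_full_dimension_probability}, so almost all the work has already been done. The theorem is exactly the statement of that lemma with the substitutions $A \mapsto \Espace$, $B \mapsto \Fspace$, $\alpha \mapsto t$, and $\beta \mapsto \lambda$. First I would verify that these substitutions are legitimate, i.e.\ that the hypotheses of the lemma hold in the theorem's setting. The lemma requires $\Fspace$ to be a \emph{fixed} free submodule of dimension $\beta$, which is immediate from Definition~\ref{def:LRPCcodes} with $\beta = \lambda$. It also requires $A = \Espace$ to be drawn uniformly at random from the set of free submodules of dimension $\alpha = t$. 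Here the only genuine point to check is that drawing an error word $\e$ uniformly among all error words with free support of dimension $t$ induces the uniform distribution on the support $\Espace$ over the set of free submodules of dimension $t$.

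For that reduction I would argue as follows: conditioned on a fixed free support $\Espace$ of dimension $t$, the number of error words $\e \in \Rqm^n$ having exactly that support does not depend on which particular $\Espace$ was chosen (it equals the number of length-$n$ vectors over $\Rq$-module $\Espace$ whose entries generate all of $\Espace$, a count that depends only on $t$, $n$, and the ring, not on the embedding of $\Espace$ in $\Rqm$). Hence the pushforward of the uniform distribution on such error words, under the support map $\e \mapsto \suppR(\e)$, is the uniform distribution on free submodules of dimension $t$ — precisely the distribution assumed in Lemma~\ref{lem:product_space_full_dimension_probability}. With this in place, the probability bound on $\Pr(\dim_{\Rq}(\Espace\Fspace) < t\lambda)$ is obtained verbatim from the lemma after renaming variables, and the side condition $t\lambda < m$ of the theorem matches the regime ($\alpha\beta \le m$) in which the bound is meaningful.

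I would then simply write out the substituted inequality, turning $1 - \alpha\sum_{j}[\cdots]$ into the complementary-event bound $\Pr(\dim_{\Rq}(\Espace\Fspace) < \lambda t) \le t\sum_{j=0}^{r-1}\big[(q/p^j)^\lambda - (q/p^{j+1})^\lambda\big](q/p^j)^{\lambda t - m}$, which is exactly the claimed expression. Since the failure event ``$\dim(\Espace\Fspace) < \lambda t$'' is the complement of ``$\Espace\Fspace$ is a free module of dimension $t\lambda$'', and the maximal possible dimension of the product is $t\lambda$, the two formulations agree and no further estimation is needed.

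The only real obstacle is the distributional reduction in the second paragraph: one must be careful that uniformity over error \emph{words} of a given free-support dimension translates to uniformity over support \emph{modules}, rather than introducing a bias toward supports generated by more vectors. I expect this to be routine once the support-fiber count is seen to be independent of the chosen free submodule, which follows because any two free submodules of the same dimension are related by an $\Rq$-module automorphism of $\Rqm$ and the count of generating $n$-tuples is invariant under such automorphisms. Everything else is mechanical renaming, so I would keep the proof to a sentence or two invoking Lemma~\ref{lem:product_space_full_dimension_probability} directly, exactly as the sentence preceding the theorem already anticipates.
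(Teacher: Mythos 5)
Your proposal is correct and matches the paper's approach exactly: the paper proves Theorem~\ref{thm:product_condition_main_statement} in a single sentence by specializing Lemma~\ref{lem:product_space_full_dimension_probability} with $A=\Espace$, $B=\Fspace$, $\alpha=t$, $\beta=\lambda$. Your additional verification that uniformity over error words with free support of dimension $t$ pushes forward to the uniform distribution on free rank-$t$ submodules (via the automorphism-invariant fiber count) is a point the paper leaves implicit, and your argument for it is sound.
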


\subsection{Failure of Syndrome Condition}

\begin{lemma}\label{lem:syndrome_number_of_full-rank_matrices}
Let $a,b$ be positive integers with $a < b$. Then,
$\NM(a,b;\Rq) := |\{\A \in \Rq^{a\times b}:\frk(\A)=\rk(\A) =a \}| = q^{a b} \prod_{a'=0}^{a-1} \left(1-p^{a'-b} \right)$.
\end{lemma}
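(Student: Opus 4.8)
The plan is to count the matrices $\A \in \Rq^{a \times b}$ with $\frk(\A) = \rk(\A) = a$ by building them up one row at a time, exactly as in the classical field-case derivation of the number of full-rank matrices, but carefully tracking what ``free rank $a$'' means over the chain ring $\Rq$. The key observation is that since $a < b$ and we require the free rank to equal $a$, the condition $\frk(\A) = a$ is equivalent to saying the $a$ rows of $\A$ are linearly independent \emph{over $\Rq$ in the strong sense that they can be completed to a free basis}, i.e. each successive row must not lie in the $\Rq$-module spanned by the previous rows modulo the maximal ideal $\maxIdeal = (p)$. So first I would reduce the problem modulo $p$: by Lemma~\ref{lem:Rqm_units} (and the analogous characterization of units via the residue field), a collection of vectors over $\Rq$ is part of a free basis precisely when their reductions modulo $p$ are linearly independent over the residue field $\Rq/\maxIdeal \cong \mathbb{F}_p$.

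Next I would set up the row-by-row count. Suppose the first $i$ rows $\a_1,\dots,\a_i$ already have free rank $i$; I want to count the choices of the next row $\a_{i+1} \in \Rq^b$ that keep the free rank at $i+1$. By the mod-$p$ reduction above, $\a_{i+1}$ is admissible if and only if its reduction $\bar{\a}_{i+1} \in \mathbb{F}_p^b$ avoids the $i$-dimensional $\mathbb{F}_p$-subspace spanned by $\bar{\a}_1,\dots,\bar{\a}_i$. Each residue class modulo $p$ in $\Rq^b$ contains exactly $|pR_q|^b = (q/p)^b$ lifts (since the kernel of reduction $\Rq^b \to \mathbb{F}_p^b$ has size $(q/p)^b$), and there are $p^b$ residue classes total, of which $p^b - p^i$ are admissible. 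Hence the number of admissible $\a_{i+1}$ is $(p^b - p^i)(q/p)^b$. Multiplying over $i = 0,1,\dots,a-1$ gives
\begin{equation*}
\NM(a,b;\Rq) = \prod_{i=0}^{a-1} (p^b - p^i)(q/p)^b = (q/p)^{ab} \prod_{i=0}^{a-1} (p^b - p^i).
\end{equation*}
Finally I would simplify this to the stated closed form: factoring $p^b$ out of each term, $\prod_{i=0}^{a-1}(p^b - p^i) = p^{ab}\prod_{i=0}^{a-1}(1 - p^{i-b})$, so $\NM(a,b;\Rq) = (q/p)^{ab} \, p^{ab} \prod_{i=0}^{a-1}(1-p^{i-b}) = q^{ab}\prod_{a'=0}^{a-1}(1-p^{a'-b})$, which matches the claim after renaming the index.

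The main obstacle I anticipate is justifying rigorously the equivalence between ``$\frk(\A) = \rk(\A) = a$'' and ``the rows reduce to $a$ linearly independent vectors over $\mathbb{F}_p$.'' One must argue that because $a < b$, there is no obstruction coming from the torsion/non-free part of the Smith normal form: if the $a$ rows were $\Rq$-linearly dependent in a way involving non-unit coefficients (so that $\rk < a$, or $\rk = a$ but $\frk < a$ because a diagonal Smith entry is a non-unit like $p^j$), then their mod-$p$ reductions would be dependent over $\mathbb{F}_p$, and conversely. This is precisely where the chain-ring structure and the unit characterization of Lemma~\ref{lem:Rqm_units} (together with Lemma~\ref{lem:Rq_decomposition}) are needed, and it is the one place where the ring case genuinely departs from the field case. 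I would make this equivalence a clean standalone observation before starting the counting, so that the row-by-row induction itself becomes the same counting argument as over a field.
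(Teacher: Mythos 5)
Your proposal is correct, and it arrives at the paper's formula by the same row-by-row induction but with a different key step for counting admissible rows. The paper stays inside $\Rq$: given $i$ rows of free rank $i$, it uses the Smith normal form $\S\A\T=\D$ (with unit diagonal) to parametrize exactly the \emph{bad} next rows, namely those corresponding to $\v'$ with arbitrary entries in the first $i$ coordinates and entries in $p\Rq$ in the remaining $b-i$, giving $q^{i}(q/p)^{b-i}=q^{b}p^{i-b}$ bad choices; you instead reduce modulo the maximal ideal $\maxIdeal=(p)$ and count over the residue field, where a row is admissible iff its image in $\mathbb{F}_p^{b}$ avoids the $i$-dimensional span of the previous rows' images, giving $(p^{b}-p^{i})(q/p)^{b}$ good choices---the identical number $q^{b}(1-p^{i-b})$, so the products agree. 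The equivalence you rightly flag as the crux (that $\frk(\A)=a$ iff the reductions are $\mathbb{F}_p$-independent) does hold, and the cleanest rigorous justification is precisely the tool the paper uses directly: reduce $\S\A\T=\D$ modulo $p$; since $\bar{\S},\bar{\T}$ remain invertible over $\mathbb{F}_p$, the $\mathbb{F}_p$-rank of $\bar{\A}$ equals the number of unit diagonal entries of $\D$, i.e.\ $\frk(\A)$; moreover $\frk(\A)\le\rk(\A)\le a$ always, so $\frk(\A)=a$ alone forces $\rk(\A)=a$ and no separate torsion analysis is needed. (Note that Lemma~\ref{lem:Rqm_units} and Lemma~\ref{lem:Rq_decomposition}, which you cite, concern single elements of $\Rqm$ and $\Rq$ rather than this matrix statement, so you should make the SNF-reduction argument explicit rather than lean on them.) What your route buys is conceptual transparency: the residue-field picture makes the per-step count manifestly independent of the particular previous rows and transfers verbatim to any finite local ring; what the paper's route buys is self-containment, working directly with the Smith normal form already in play without introducing the mod-$\maxIdeal$ correspondence as an extra lemma.
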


\begin{proof}
First note that $\NM(1,b;\Rq) = q^b-(q/p)^b$ since a $1 \times b$ matrices over $\Rq$ is of free rank $1$ if and only if at least one entry is a unit. Hence we subtract from the number of all matrices ($q^b$) the number of vectors that consist only of non-units ($(q/p)^b$).

Let now for any $a' \leq a$ be $\A \in \Rq^{a' \times b}$ a matrix of free rank~$a'$. We define 
$\mathcal{V}(\A) := \big\{ \v \in \Rq^{1 \times b} \! : \! \frk \big(\begin{bmatrix}
\A^\top 
\v^\top
\end{bmatrix}^\top \big) = a' \big\}$.
We study the cardinality of $\mathcal{V}(\A)$.
We have $\frk\big(\begin{bmatrix}
\A^\top
\v^\top
\end{bmatrix}^\top\big) = a'$ if and only if the rows of the matrix $\hat{\A} := \begin{bmatrix}
\A^\top
\v^\top
\end{bmatrix}^\top$ are linearly dependent.
Due to $\frk(\A) = a'$ and the existence of a Smith normal form of $\A$, there are invertibe matrices $\S$ and $\T$ such that
$\S \A \T = \D$,
where $\D$ is a diagonal matrix with ones on its diagonal.

Since $\S$ and $\T$ are invertible, we can count the number of vectors $\v'$ such that the rows of the matrix
$\big[
\D^\top
{\v'}^\top
\big]^\top$
are linearly independent instead of the matrix $\hat{\A}$ (note that $\v = \v' \T^{-1}$ gives a corresponding linearly dependent row in~$\hat{\A}$).

Since $\D$ is in diagonal form with only ones on its diagonal, the linearly dependent vectors are exactly of the form
\begin{equation*}
\v' = [v'_1,\dots,v'_a,v'_{a'+1},\dots, v'_b],
\end{equation*}
where $v'_i \in \Rq$ for $i=1,\dots,a'$ and $v'_i \in p \Rq$ for $i=a'+1,\dots,b$. Hence, we have 
$|\mathcal{V}(\A)| = q^{a'} (q/p)^{b-a'} = q^b p^{a'-b}$.
Note that this value is independent of $\A$.

By the discussion on $|\mathcal{V}(\A)|$, we get the following recursive formula:
\begin{align*}
\NM(a'\!+ \!1,b;\Rq) \! = \! \begin{cases}
\NM(a',b;\Rq)\! \left( q^b-q^b p^{a'-b} \right), \! \! &a'\geq 1, \\
q^b-(q/p)^b, &a'=0,
\end{cases}
\end{align*}
which resolves into
$\NM(a,b;\Rq) \! \! = \! \!  q^{a b} \prod_{a'=0}^{a-1} \left(1-p^{a'-b}\right)$.
\end{proof}

\begin{theorem}\label{thm:syndrome_condition_main_statement}
Suppose that the product condition is fulfilled.
Let $\Fspace$ be defined as in Defintion~\ref{def:LRPCcodes}. Let $t$ be a positive integer with $t \lambda <m$ and $\Espace$ be the support of a error word $\e$ chosen uniformly at random among all error words with free support of dimension $t$.

Suppose further that $\H$ has the {maximal-row-span} and {unity properties} (cf.~Definition~\ref{def:H_properties}).

Then, the probability that the {syndrome condition} is not fulfilled is
\begin{align*}
&\Pr\left( \dim_{\Rq} \left( \Sspace \right) < \lambda t  \mid \dim_{\Rq} \left( \Espace \Fspace \right) = \lambda t \right) \\
&\leq 1 - \textstyle \prod_{i=0}^{\lambda t -1} \left(1-p^{i-(n-k)}\right).
\end{align*}
\end{theorem}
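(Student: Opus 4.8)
The plan is to recast the syndrome condition as a full-free-rank condition on a single coordinate matrix over $\Rq$, and then to show that this structured matrix is at least as likely to have full free rank as a uniformly random one, at which point Lemma~\ref{lem:syndrome_number_of_full-rank_matrices} delivers the stated bound. Concretely, I would fix a free basis $f_1,\dots,f_\lambda$ of $\Fspace$ and, after conditioning on the error support $\Espace$ (made free with $\Espace\Fspace$ free of dimension $\lambda t$ by the product condition), a basis $\epsilon_1,\dots,\epsilon_t$ of $\Espace$, so that $\{\epsilon_a f_\ell\}_{a,\ell}$ is a basis of $\Espace\Fspace$. Writing $e_j=\sum_a E_{a,j}\epsilon_a$ and $H_{i,j}=\sum_\ell h_{i,j,\ell}f_\ell$, the $i$-th syndrome $s_i=\sum_j e_jH_{i,j}$ has coordinates $c_{i,(a,\ell)}=\sum_j E_{a,j}h_{i,j,\ell}$ in this basis. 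Collecting them into $\C\in\Rq^{(n-k)\times\lambda t}$, the module $\Sspace$ is the $\Rq$-row span of $\C$ inside the free module $\Espace\Fspace$, so $\dim_{\Rq}\Sspace<\lambda t$ is equivalent to $\frk(\C)<\lambda t$, i.e.\ to $\C$ failing to have full free (column) rank.

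Next I would identify the right-hand side with a uniform-matrix probability. By Lemma~\ref{lem:syndrome_number_of_full-rank_matrices} applied with $a=\lambda t$ and $b=n-k$ (we are in the meaningful regime $\lambda t<n-k$), the quantity $\prod_{i=0}^{\lambda t-1}(1-p^{i-(n-k)})$ equals $\NM(\lambda t,n-k;\Rq)/q^{\lambda t(n-k)}$, the probability that a uniformly random $\lambda t\times(n-k)$ matrix over $\Rq$ has full free rank. Since $\frk(\C)=\frk(\C^\top)$ and $\C^\top\in\Rq^{\lambda t\times(n-k)}$, it therefore suffices to prove the domination $\Pr(\frk(\C)=\lambda t)\ge \NM(\lambda t,n-k;\Rq)/q^{\lambda t(n-k)}$, i.e.\ that the structured syndrome matrix is no less likely to attain full free rank than a uniform one.

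For the domination I would run an induction on the $t$ error-support directions, in direct analogy with Lemma~\ref{lem:product_space_full_dimension_probability} (which added $\lambda$ product-space dimensions per error direction): processing $a=1,\dots,t$, each step introduces the $\lambda$ new coordinate columns indexed by $(a,1),\dots,(a,\lambda)$, which depend on the fresh random error row $(E_{a,1},\dots,E_{a,n})$. The structural hypotheses are used exactly here. The \emph{maximal-row-span property} guarantees that for every $i$ the coefficient vectors $(h_{i,j,1},\dots,h_{i,j,\lambda})$, $j=1,\dots,n$, generate $\Rq^\lambda$, so that each syndrome genuinely reaches all $\lambda$ directions of $\Fspace$ and no coordinate block of $\C$ is degenerate. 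The \emph{unity property} forces every $h_{i,j,\ell}\in\Rq^*\cup\{0\}$, so that each nonzero term $E_{a,j}h_{i,j,\ell}$ is the image of the uniform coordinate $E_{a,j}$ under multiplication by a unit---a measure-preserving bijection of $\Rq$ that keeps the induced distribution spread over all of $\Rq$ rather than collapsing it into a proper ideal $p^{j'}\Rq$. With these in hand I would bound, at each step, the conditional probability that the $\lambda$ new columns are free-dependent on the span accumulated so far, matching it against the corresponding factors $1-p^{i-(n-k)}$ of the uniform model; multiplying the conditional survival probabilities over all $\lambda t$ columns yields $\prod_{i=0}^{\lambda t-1}(1-p^{i-(n-k)})$ as a lower bound for $\Pr(\frk(\C)=\lambda t)$, hence the claimed upper bound on the failure probability. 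A convenient bookkeeping device is the identity $\frk(\C)=\rk_{\F_p}(\bar{\C})$ for the reduction $\bar{\C}$ of $\C$ modulo $p$, which turns each free-rank step into an ordinary rank step over the residue field $\F_p$, into which the maximal-row-span hypothesis descends.

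The step I expect to be the main obstacle is precisely the conditional bound glossed over above: all $n-k$ syndromes are built from the same random error $\e$, and the $\lambda t$ coordinate directions are coupled both through the shared error coordinates $E_{a,j}$ and through the fixed parity-check structure, so the columns of $\C$ are neither independent nor uniform. The technical heart is to show that, conditioned on the columns already processed, the distribution of the next $\lambda$ columns is constrained no more severely than in the uniform model---and this is exactly what the \emph{unity property} (keeping each step's conditional distribution uniform over $\Rq$ rather than over a sub-ideal) and the \emph{maximal-row-span property} (keeping every step non-degenerate) are there to secure, so that each conditional factor can be pushed down to the uniform-matrix value $1-p^{i-(n-k)}$.
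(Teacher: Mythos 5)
Your reduction of the syndrome condition to the event $\frk(\C) = \lambda t$ for the coordinate matrix $\C \in \Rq^{(n-k)\times \lambda t}$, and your identification of the bound with $\NM(\lambda t, n-k;\Rq)/q^{\lambda t(n-k)}$ via Lemma~\ref{lem:syndrome_number_of_full-rank_matrices}, both match the paper. But there is a genuine gap, and it sits exactly where you place your ``main obstacle'': the paper proves no stochastic domination and runs no induction over error directions, because it first changes the error model. It introduces $\e'$ whose $n$ entries are i.i.d.\ uniform on $\Espace$ and observes that
$\Pr\big(\Sspace_{\e'} = \Espace \Fspace\big) \leq \Pr\big( \Sspace_{\e'} = \Espace \Fspace \mid \supp(\e') = \Espace \big) = \Pr\big( \Sspace_{\e} = \Espace \Fspace \big)$,
so it suffices to bound the failure probability for $\e'$. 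This relaxation is the missing idea in your proposal: under the theorem's actual error model, $\e$ is uniform over vectors of \emph{free rank} $t$, so the coordinate rows $(E_{a,1},\dots,E_{a,n})$ are jointly conditioned on full free rank and are not ``fresh'' i.i.d.\ uniform rows as your induction assumes. Without the relaxation step, your per-step randomness claim is unjustified from the start.

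Even granting i.i.d.\ coordinates, your column-by-column domination is doubtful as stated. For fixed $\mu$, the $\lambda(n-k)$ entries $s_{\mu,\eta,i} = \sum_j e'_{j,\mu} h_{i,j,\eta}$ are a \emph{fixed linear image} (essentially via $\H_{\mathrm{ext}}$) of only $n$ uniform scalars, and the unique-decoding property forces $\lambda(n-k) \geq n$; so each block of $\lambda$ columns of $\C$ is supported on a submodule generated by at most $n$ elements of a $\lambda(n-k)$-dimensional ambient space, and conditioning on earlier columns of the same block can skew the later ones arbitrarily. Showing that each conditional factor can nevertheless be ``pushed down'' to the uniform value $1-p^{i-(n-k)}$ is a substantive claim your proposal does not establish. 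The paper's route is different and shorter: it uses the maximal-row-span and unity properties once, entrywise, to show each $s_{\mu,\eta,i}$ contains at least one summand that is a unit of $\Rq$ times a uniform element of $\Rq$ and is hence uniform on $\Rq$; it then treats the matrix $\S = (s_{\mu,\eta,i})$ as uniform on $\Rq^{(n-k)\times t\lambda}$, notes $\Sspace_{\e'} = \Espace\Fspace$ if and only if $\S$ has full free rank (using that $\{\varepsilon_\mu f_\eta\}$ is a basis of $\Espace\Fspace$ by the product condition), and applies Lemma~\ref{lem:syndrome_number_of_full-rank_matrices} in a single shot. So your skeleton agrees with the paper at the endpoints, but the technical heart --- the relaxation to i.i.d.\ entries and the resulting uniformity of $\S$ --- is absent, and the inductive substitute you sketch is left unproven precisely at its decisive step.
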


\begin{proof}
Let $\e' \in \Rqm^n$ be chosen such that every entry $e_i'$ is chosen uniformly at random from the error support $\Espace$.\footnote{This means that $\e'$ might have rank $t$ or smaller. The difference to the actual error $\e$ is that $\e$ is chosen uniformly at random from the vectors of rank $t$.}
Denote by $\S_{\e}$ and $\S_{\e'}$ the syndrome spaces obtained by computing the syndromes of $\e$ and $\e'$, respectively.
Then, we have
\begin{align*}
\Pr\big(\Sspace_{\e'} = \Espace \Fspace\big) &\leq \Pr\big( \Sspace_{\e'} = \Espace \Fspace \mid \supp(\e') = \Espace \big) \\
&= \Pr\big( \Sspace_{\e} = \Espace \Fspace \big),
\end{align*}
where the latter equality follows from the fact that the random experiments of choosing $\e'$ and conditioning on the property that $\e'$ has free rank $t$ is the same as directly drawing $\e$ uniformly at random from the set of free rank $t$ errors.
Hence, we obtain a lower bound on $\Pr\big( \Sspace_{\e} = \Espace \Fspace \big)$ by studying $\Pr\big(\Sspace_{\e'} = \Espace \Fspace\big)$, which we do in the following.

Let $f_1,\dots,f_\lambda$ be a basis of $\Fspace$ and $\varepsilon_1,\dots,\varepsilon_t$ be a basis of $\Espace$.
Since $e'_i$ is an element drawn uniformly at random from $\Espace$, we can write it as
$e'_i = \sum_{\mu=1}^{t} e_{i,\mu}' \varepsilon_\mu$,
where $e_{i,j}'$ are uniformly distributed on $\Rq$.
Furthermore, we can write any $H_{i,j}$ as
$H_{i,j} = \sum_{\eta=1}^{\lambda} h_{i,j,\eta} f_\eta$,
where the $h_{i,j,\eta}$ are units in $\Rq$ or zero (due to the {unity property}). Furthermore, since each row of $\H$ spans the entire module $\Fspace$ ({full-row-span property}), for each $i$ and each $\eta$, there is at least one $j^*$ with $h_{i,j^*,\eta}$. By the previous assumption, this means that $h_{i,j^*,\eta} \in \Rq^*$.

Then, each syndome coefficient can be written as
\begin{equation*}
s_i = \textstyle \sum_{j=1}^{n} e'_j H_{i,j}
= \textstyle \sum_{\mu=1}^{t} \textstyle \sum_{\eta=1}^{\lambda} \underbrace{\left(\textstyle \sum_{j=1}^{n}  e_{j,\mu}'  h_{i,j,\eta}\right)}_{=: s_{\mu,\eta,i}} \varepsilon_\mu f_\eta.
\end{equation*}

By the above discussion, for each $i$ and $\eta$, there is a $j^*$ with $h_{i,j^*,\eta} \neq 0$. Hence, $s_{\mu,\eta,i}$ is a sum (with at least one summand) of the products of uniformly distributed elements of $\Rq$ and units of $\Rq$. A uniformly distributed ring element times a unit is also uniformly distributed on $\Rq$. Hence $s_{\mu,\eta,i}$ is a sum (with at least one summand) of uniformly distributed elements of $\Rq$. Hence, $s_{\mu,\eta,i}$ itself is uniformly distributed on $\Rq$.

All together, we can write
\begin{align*}
\begin{bmatrix}
s_1 \\
s_2 \\
\vdots \\
s_{n-k}
\end{bmatrix}
= 
\underbrace{
\begin{bmatrix}
s_{1,1,1} & s_{1,2,1} & \dots & s_{t,\lambda,1} \\
s_{1,1,2} & s_{1,2,2} & \dots & s_{t,\lambda,2} \\
\vdots & \vdots & \ddots & \vdots \\
s_{1,1,n-k} & s_{1,2,n-k} & \dots & s_{t,\lambda,n-k} \\
\end{bmatrix}}_{=: \, \S}
\cdot 
\begin{bmatrix}
\varepsilon_1 f_1 \\
\varepsilon_1 f_2 \\
\vdots \\
\varepsilon_t f_\lambda \\
\end{bmatrix},
\end{align*}
where the $\varepsilon_i f_j$ are a basis of $\Espace \Fspace$ (since the product condition is fulfilled by assumption) and the matrix $\S$ is chosen uniformly at random from $\Rq^{(n-k)\times t \lambda}$.
We have $\Sspace_{\e'} = \Espace\Fspace$ if and only if $\S$ has full free rank $t \lambda$.
By Lemma~\ref{lem:syndrome_number_of_full-rank_matrices}, the probability of drawing such a full-rank matrix is
$\frac{\NM(a,b;\Rq)}{q^{ab}} = \prod_{a'=0}^{a-1} \left(1-p^{a'-b} \right)$.
\end{proof}

\begin{remark}\label{rem:necessity_of_maximal_row_span_condition_or_similar}
In contrast to Theorem~\ref{thm:syndrome_condition_main_statement} the {full-row-span property} was not assumed in \cite[Proposition~4.3]{aragon2019low}, which is the analogous statement for finite fields.
However, also the statement in \cite[Proposition~4.3]{aragon2019low} is only correct if we assume additional structure on the parity-check matrix
{(\emph{e.g.}, that each row spans the entire space $\Fspace$ or a weaker condition), due to the following counterexample:
Consider a parity-check matrix $\H$ that contains only non-zero entries on its diagonal and in the last row.
More precisely, the diagonal entries are all $f_1$ and the last row contains the remaining $f_2,\dots,f_\lambda$.
This is a valid parity-check matrix according to \cite[Definition~4.1]{aragon2019low} since the entries of $\H$ span the entire space $\Fspace$.
However, due to the structure of the matrix, the first $n-k-1$ syndromes are all in $f_1 \Espace$, hence $\dim(\Sspace) \leq t+1 < t \lambda$ for \textbf{any} error of dimension $t$.}
\end{remark}

\subsection{Failure of Intersection Condition}

\begin{lemma}[{Equivalent of \cite[Lemma~3.4]{aragon2019low}}]\label{lem:intersection_failure_lemma}
Let $A,B \subseteq \Rqm$ be free $\Rq$-modules of dimensions $\alpha$ and $\beta$, respectively. Furthermore, let $\beta_2 := \dim(B^2)$.

Assume that $\dim(A B^2) = \alpha \beta_2$ and there is a module $E \subseteq \Rqm$ with $A \subsetneq E$ and $EB = AB$. Then, there is an $x \in B \setminus \Rq$ such that $xB \subseteq B$.
\end{lemma}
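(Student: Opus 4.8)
The plan is to use the commutativity of $\Rqm$ to convert the single hypothesis $EB=AB$ into a rigid family of multiplicative identities inside $B$, and then to invoke the freeness hypothesis $\dim(AB^2)=\alpha\beta_2$ to annihilate all the ``error terms'' those identities produce, leaving a clean proportionality from which the multiplier is read off. First I would reduce to one extra generator: since $A\subsetneq E$, fix $e\in E\setminus A$ and replace $E$ by $A+\Rq e$, which still satisfies the hypothesis because $AB\subseteq(A+\Rq e)B\subseteq EB=AB$; in particular $eB\subseteq AB$. Fixing free bases $a_1,\dots,a_\alpha$ of $A$ and $b_1,\dots,b_\beta$ of $B$, the inclusion $eb_j\in AB=\langle a_ib_k\rangle_{\Rq}$ lets me write, for every $j$,
\[
eb_j=\sum_{i=1}^{\alpha}a_i\,\beta_{ij},\qquad \beta_{ij}\in B,
\]
where $\beta_{ij}$ is obtained by collecting the coefficients of $a_i$ in an expansion of $eb_j$ over the generators $a_ib_k$.

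Next I would bring commutativity into play. Multiplying the $j$-th relation by $b_{j'}$ and the $j'$-th by $b_j$ and using $eb_jb_{j'}=eb_{j'}b_j$ yields
\[
\sum_{i=1}^{\alpha}a_i\big(\beta_{ij}b_{j'}-\beta_{ij'}b_j\big)=0,
\]
in which every bracket $\gamma_{ijj'}:=\beta_{ij}b_{j'}-\beta_{ij'}b_j$ lies in $B^2$. This is exactly where $\dim(AB^2)=\alpha\beta_2$ is needed: choosing a basis $g_1,\dots,g_{\beta_2}$ of $B^2$, the equality of dimensions says precisely that the $\alpha\beta_2$ products $a_ig_l$ are $\Rq$-linearly independent (equivalently, the multiplication map $A\otimes_{\Rq}B^2\to AB^2$ is an isomorphism). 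Hence any relation $\sum_i a_i\gamma_i=0$ with $\gamma_i\in B^2$ forces each $\gamma_i=0$, so $\beta_{ij}b_{j'}=\beta_{ij'}b_j$ for all $i,j,j'$.

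Finally I would extract the multiplier. Since $B$ is free of positive dimension it contains a unit of $\Rqm$ (as used in the proof of Lemma~\ref{lem:product_space_full_dimension_probability_induction}); reorder the basis so that $b_1\in\Rqm^{\ast}$. The identity $\beta_{ij}b_1=\beta_{i1}b_j$ then gives $\beta_{ij}=x_ib_j$ with $x_i:=\beta_{i1}b_1^{-1}\in\Rqm$, and because $x_ib_j=\beta_{ij}\in B$ we obtain $x_iB\subseteq B$ for every $i$. Substituting back, $eb_j=\big(\sum_i a_ix_i\big)b_j$ for all $j$, and evaluating at the unit $b_1$ yields $e=\sum_i a_ix_i$. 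If every $x_i$ were a scalar in $\Rq$ this would force $e\in A$, contradicting $e\notin A$; therefore some $x_{i_0}\notin\Rq$, and $x:=x_{i_0}$ is the sought non-scalar element with $xB\subseteq B$.

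The main obstacle is the lack of division in $\Rqm$: over a field one would immediately set $x_i=\beta_{ij}/b_j$, whereas over the chain ring I must first secure a genuine \emph{unit} $b_1\in B$ before inverting, which is what makes the normalization $x_i=\beta_{i1}b_1^{-1}$ legitimate. The second delicate point is the implication ``$\sum_i a_i\gamma_i=0\Rightarrow\gamma_i=0$'', which is false for a general ring action and here rests entirely on interpreting $\dim(AB^2)=\alpha\beta_2$ as the freeness (linear independence of the $a_ig_l$) of the product module $AB^2$.
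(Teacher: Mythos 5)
Your argument is sound, and in one respect tighter than the paper's, right up to the final sentence. The cross-relations $\beta_{ij}b_{j'}=\beta_{ij'}b_j$ obtained from commutativity are a genuinely different mechanism from the paper's (which compares two representations of $eb$), and your use of the hypothesis $\dim(AB^2)=\alpha\beta_2$ is cleaner: both products $\beta_{ij}b_{j'}$ and $\beta_{ij'}b_j$ lie honestly in $B^2$, whereas the paper compares a representation with coefficients in $B$ against one with coefficients in $B^2$. The genuine gap is at the end: the lemma asserts $x\in B\setminus\Rq$, but your $x=x_{i_0}=\beta_{i_0,1}b_1^{-1}$ is only shown to lie in $\Rqm\setminus\Rq$. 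The relations you have, namely $x b_j=\beta_{i_0,j}\in B$ for all $j$ (i.e., $xB\subseteq B$), do not place $x$ itself inside $B$: membership of $\beta_{i_0,1}b_1^{-1}$ in $B$ would require a stability of $B$ under multiplication by $b_1^{-1}$ that nothing in the hypotheses provides, and no element of $B$ produced along your way serves as a substitute --- for instance $\beta_{ij}b_k=x_ib_jb_k$ lands only in $B^2$, so the $\beta_{ij}$ are not shown to stabilize $B$. As a proof of the statement as written, the membership $x\in B$ is simply missing, and there is no one-line patch within your setup.

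The paper secures membership in $B$ by a different choice at the start: it takes $e$ in $AB\setminus A$ (not merely in $E\setminus A$) with $eB\subseteq AB$, expands $e=\sum_i b_i'a_i$ with $b_i'=\sum_j e_{i,j}b_j\in B$, and proves $b_\eta'B\subseteq B$ for some $b_\eta'\notin\Rq$; the stabilizing element is then by construction an $\Rq$-combination of the $b_j$, hence in $B$. (Your reduction deliberately avoids the claim $e\in AB$, which the paper does not actually justify from $A\subsetneq E$ and $EB=AB$; so your route trades that implicit step for a weaker conclusion.) A mitigating remark: the weaker statement you do prove --- some $x\in\Rqm\setminus\Rq$ with $xB\subseteq B$ --- is all that Theorem~\ref{thm:intersection_failure_main_statement} ultimately uses, since there one only needs the ring generated by $\Rq$ and $x$ to be a proper intermediate ring and to satisfy a cardinality bound against $|\Fspace|=q^\lambda$ via multiplication by a unit of $\Fspace$. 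But to prove the lemma as stated you must either justify that $e$ can be chosen in $AB$ and then follow the paper's extraction, or otherwise exhibit an element of $B$ itself that stabilizes $B$.
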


\begin{proof}
Let $a_1,\dots,a_\alpha$ be a basis of $A$ and $b_1,\dots,b_\beta$ be a basis of $B$.

First note that the existence of $E$ with the presumed properties implies that there is an $e \in AB \setminus A$ such that $eB \subseteq AB$.
Then, there are coefficients $e_{i,j} \in \Rq$ with
\begin{align}
\textstyle e = \sum_{i=1}^{\alpha} \underbrace{\left(\textstyle\sum_{j=1}^{\beta} e_{i,j} b_j \right)}_{=: \, b'_i} a_i. \label{eq:e_unique_representation}
\end{align}
By assumption, $e$ is not in $A$, which means that there is an $\eta \in \{1,\dots,\beta\}$ with $b_\eta' \notin \Rq$.

Let now $b \in B$. Since by assumption $eb \in AB$, there are $c_{i,j} \in \Rq$ with $e b = \sum_{i=1}^{\alpha} \left(\sum_{j=1}^{\beta} c_{i,j} b_j \right) a_i$.
By \eqref{eq:e_unique_representation}, we can also write 
$e b = \sum_{i=1}^{\alpha} \left(\sum_{j=1}^{\beta} e_{i,j} b_j b \right) a_i$.
Due to the maximality of the dimension of $AB^2$, there is a unique representation $c = \sum_i c_i a_i$ with $c_i \in B^2$ for each $c \in AB^2$. Since $eb \in AB$, we must therefore have 
$b_i' b = \left(\sum_{j=1}^{\beta} e_{i,j} b_j\right) b = \sum_{j=1}^{\beta} c_{i,j} b_j \quad \forall \, i$,
in particular
$b_\eta' b \in B$.
Since this hold for any $b$, we have $b_\eta' B \subseteq B$ (recall also that $b_\eta' \notin \Rq$). Choosing $x=b_\eta'$ gives the claimed result.
\end{proof}

\begin{theorem}\label{thm:intersection_failure_main_statement}
Suppose that the syndrome condition is fulfilled and $m$ is chosen such that the smallest intermediate ring $R'$ between $\Rq \subsetneq R' \subseteq \Rqm$ has cardinality greater than $q^\lambda$. 
Let $\Fspace$ be defined as in Defintion~\ref{def:LRPCcodes}. Let $t$ be a positive integer with $t \lambda <m$ and $\Espace$ be the support of a error word $\e$ chosen uniformly at random among all error words with free support of dimension $t$.

Then, the probability that the {intersection condition} is not fulfilled is
\begin{align*}
&\Pr\left( \dim_{\Rq} \left( \textstyle \bigcap_{i=1}^{\lambda} \Sspace_i \right) > t \mid   \dim_{\Rq} \left( \Sspace \right) = \lambda t  \right) \\
&\leq t \textstyle \sum_{j = 0}^{r-1} \Big[(q/p^j)^\beta-(q/p^{j+1})^\beta\Big] \left(q/p^j\right)^{\frac{t \lambda(\lambda+1)}{2}-m}.
\end{align*}
\end{theorem}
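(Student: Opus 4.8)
The plan is to reduce a failure of the \textbf{intersection condition} to a failure of the (much more tractable) product condition for the module $\Fspace^2$, to use Lemma~\ref{lem:intersection_failure_lemma} to extract the only structural obstruction that could prevent this reduction, and then to use the hypothesis on $m$ to rule that obstruction out. First I would unpack what a failure means. By the syndrome condition $\dim_{\Rq}\Sspace=\lambda t$ together with $\Sspace\subseteq\Espace\Fspace$ (established in the proof of Theorem~\ref{thm:decoder_correctness}), we have $\Sspace=\Espace\Fspace$, hence $\Sspace_i=f_i^{-1}\Espace\Fspace$ for each $i$. Since $f_i\Espace\subseteq\Espace\Fspace$, every $\Sspace_i$ contains $\Espace$, so $\Espace\subseteq E':=\bigcap_{i=1}^{\lambda}\Sspace_i$ and $\dim_{\Rq}E'\geq t$ always holds. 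Thus the intersection condition fails exactly when $\Espace\subsetneq E'$. For any $e\in E'$ one has $f_ie\in\Espace\Fspace$ for all $i$, so $E'\Fspace=\sum_i f_iE'\subseteq\Espace\Fspace$; combined with $\Espace\subseteq E'$ this gives $E'\Fspace=\Espace\Fspace$. Hence, setting $A=\Espace$, $B=\Fspace$ and $E=E'$, the hypotheses ``$A\subsetneq E$ and $EB=AB$'' of Lemma~\ref{lem:intersection_failure_lemma} are met.

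Next I would split on the remaining hypothesis of Lemma~\ref{lem:intersection_failure_lemma}, namely whether $\Espace\Fspace^2$ has full free dimension $t\beta_2$, where $\beta_2:=\dim_{\Rq}(\Fspace^2)=\tfrac{\lambda(\lambda+1)}{2}$ is the generic (maximal) value; this identification is what makes the undefined $\beta$ in the statement equal to $\dim_{\Rq}(\Fspace^2)$. If $\dim_{\Rq}(\Espace\Fspace^2)<t\beta_2$, we are already inside the product-failure event for the module $\Fspace^2$. Otherwise the lemma applies and produces an $x\in\Fspace\setminus\Rq$ with $x\Fspace\subseteq\Fspace$. The crucial step is to show this cannot occur under the hypothesis on $m$: iterating $x\Fspace\subseteq\Fspace$ gives $\Rq[x]\Fspace\subseteq\Fspace$, so $\Fspace$ is a module over the ring $\Rq[x]$. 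Since $x\notin\Rq$, we have $\Rq\subsetneq\Rq[x]\subseteq\Rqm$, i.e.\ $\Rq[x]$ is an intermediate ring. Picking a unit $f\in\Fspace$ (a basis element of the free module $\Fspace$) and using that multiplication by $f$ is injective on $\Rqm$, we obtain $|\Rq[x]|=|\Rq[x]f|\leq|\Fspace|=q^{\lambda}$, contradicting the assumption that the smallest intermediate ring has cardinality strictly larger than $q^{\lambda}$. Hence this case is empty, and a failure of the intersection condition forces $\dim_{\Rq}(\Espace\Fspace^2)<t\beta_2$.

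It then remains to bound the probability of the product-failure event for $\Fspace^2$. This event depends only on the error support $\Espace$, which is drawn uniformly among free supports of dimension $t$, so I would apply Lemma~\ref{lem:product_space_full_dimension_probability} (equivalently Theorem~\ref{thm:product_condition_main_statement}) with $A=\Espace$ of dimension $\alpha=t$ and the fixed free module $B=\Fspace^2$ of dimension $\beta=\beta_2=\tfrac{\lambda(\lambda+1)}{2}$. Substituting these parameters yields exactly $t\sum_{j=0}^{r-1}\big[(q/p^j)^{\beta}-(q/p^{j+1})^{\beta}\big](q/p^j)^{\frac{t\lambda(\lambda+1)}{2}-m}$, which is the claimed estimate.

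I expect the main obstacle to be the structural/ring-theoretic step of the second paragraph: correctly translating $x\Fspace\subseteq\Fspace$ into a module structure over the intermediate ring $\Rq[x]$ and deriving the cardinality contradiction from the hypothesis on $m$. Two further technical points must be handled with care, namely ensuring that $\Fspace^2$ is free of dimension $\beta_2$ so that Lemma~\ref{lem:product_space_full_dimension_probability} applies with $B=\Fspace^2$, and checking that conditioning on the syndrome condition does not disturb the uniform distribution of $\Espace$ that the final product bound relies on, since the product-failure event depends on $\Espace$ alone.
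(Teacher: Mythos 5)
Your proposal is correct and takes essentially the same route as the paper's proof: it invokes Lemma~\ref{lem:intersection_failure_lemma} with $A=\Espace$, $B=\Fspace$, $E=\bigcap_{i=1}^{\lambda}\Sspace_i$, rules out the resulting obstruction $x\Fspace\subseteq\Fspace$ via the intermediate-ring cardinality hypothesis, and bounds the remaining event $\dim_{\Rq}(\Espace\Fspace^2)<t\dim_{\Rq}(\Fspace^2)$ by Lemma~\ref{lem:product_space_full_dimension_probability}. If anything, you are slightly more careful than the paper, which leaves the verification $E'\Fspace=\Espace\Fspace$ implicit and works with $\dim_{\Rq}(\Fspace^2)\leq\tfrac{1}{2}\lambda(\lambda+1)$ (rather than asserting equality, which need not hold, though by monotonicity in $\beta$ your substitution still yields a valid upper bound).
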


\begin{proof}
Assume that the intersection condition is not fulfilled. Then we have
$\bigcap_{i=1}^{\lambda} \Sspace_i =: \Espace' \supsetneq \Espace$.
Choose now $A = \Espace$, $E = \Espace'$, and $B = \Fspace$ in Lemma~\ref{lem:intersection_failure_lemma}. Since $\Espace$ is chosen uniformly at random from all free submodules of $\Rqm$ of dimension $t$, we can apply Lemma~\ref{lem:product_space_full_dimension_probability} and obtain that
$\dim(\Espace \Fspace^2) = t \lambda'$
with probability at least
\begin{align*}
&\Pr(\dim(\Espace \Fspace^2) = t \lambda') \\
&\geq 1- t \textstyle \sum_{j = 0}^{r-1} \Big[(q/p^j)^\beta-(q/p^{j+1})^\beta\Big] \left(q/p^j\right)^{t \lambda'-m} \\
&\geq 1- t \textstyle \sum_{j = 0}^{r-1} \Big[(q/p^j)^\beta-(q/p^{j+1})^\beta\Big] \left(q/p^j\right)^{\frac{t \lambda(\lambda+1)}{2}-m},
\end{align*}
where $\lambda' := \dim(\Fspace^2) \leq \tfrac{1}{2}\lambda(\lambda+1)$ (this is clear since $\Fspace^2$ is generated by the products of all unordered pairs of basis elements of $\Fspace$).

Hence, with probability at least this value, both conditions of Lemma~\ref{lem:intersection_failure_lemma} are fulfilled.
This means that there is an element $x \in \Fspace \setminus \Rq$ such that $x \Fspace \subseteq \Fspace$.
Thus, also $x^i \Fspace \subseteq \Fspace$ for all positive integers $i$, and we have that the ring $\Rq(x)$ extended by the element $x \notin \Rq$ fulfills $\Rq(x) \subseteq \Fspace$ (this holds since $\Fspace$ contains at least one unit).
By the condition on intermediate rings in the lemma statement, we must have $q^\lambda = |\Fspace| \geq |\Rq(x)| > q^\lambda$, a contradiction.
\end{proof}

\subsection{Overall Failure Probability}

\begin{theorem}\label{thm:main_statement}
Let $m$ be chosen such that the smallest intermediate ring $R'$ between $\Rq \subsetneq R' \subseteq \Rqm$ has cardinality greater than $q^\lambda$ and $\Fspace$ be defined as in Defintion~\ref{def:LRPCcodes}.
Suppose further that $\H$ has the {maximal-row-span} and {unity properties} (cf.~Definition~\ref{def:H_properties}).

Let $t$ be a positive integer with $t \lambda <m$ and $\e \in \Rqm^n$ be chosen uniformly at random from the set of vectors with free support of dimension $t$ (\emph{i.e.}, rank and free rank of $\e$ are $t$).

Then, Algorithm~\ref{alg:decoder} with input $\c+\e$ returns $\c$ with probability at least
\begin{align*}
&\Pr(\text{success}) \geq 1\\
&\quad - t \textstyle \sum_{j = 0}^{r-1} \Big[(q/p^j)^\beta-(q/p^{j+1})^\beta\Big] \left(q/p^j\right)^{\lambda t-m} \\
&\quad - \textstyle \prod_{i=0}^{\lambda t -1} \left(1-p^{i-(n-k)}\right) \\
&\quad - t \textstyle \sum_{j = 0}^{r-1} \Big[(q/p^j)^\beta-(q/p^{j+1})^\beta\Big] \left(q/p^j\right)^{\frac{t \lambda(\lambda+1)}{2}-m},
\end{align*}
independent of the transmitted codeword $\c$.
\end{theorem}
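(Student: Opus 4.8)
The plan is to combine the decoder-correctness result (Theorem~\ref{thm:decoder_correctness}) with the three conditional failure bounds established above (Theorems~\ref{thm:product_condition_main_statement}, \ref{thm:syndrome_condition_main_statement}, and~\ref{thm:intersection_failure_main_statement}) by a union bound over the three possible failure events. As a first step I would observe that the syndrome computed at the start of Algorithm~\ref{alg:decoder} satisfies $\s = \r\H^\top = (\c+\e)\H^\top = \e\H^\top$, since $\c$ lies in the code and hence $\c\H^\top = \0$. Consequently every subsequent quantity in the decoder (the space $\Sspace$, the spaces $\Sspace_i$, and the recovered support) depends only on $\e$ and the fixed matrix $\H$, never on $\c$; this is precisely what yields the asserted independence of the transmitted codeword.

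By Theorem~\ref{thm:decoder_correctness}, the decoder returns $\c$ whenever the product, syndrome, and intersection conditions all hold. Denoting these three (success) events by $P$, $S$, and $I$, it therefore suffices to upper-bound the failure probability $\Pr(\neg P \cup \neg S \cup \neg I)$. I would split this event into the disjoint pieces $\neg P$, $P \cap \neg S$, and $P \cap S \cap \neg I$, which gives
\[
\Pr(\text{failure}) \leq \Pr(\neg P) + \Pr(P \cap \neg S) + \Pr(P \cap S \cap \neg I).
\]
Each summand is then handled by one of the sub-theorems after aligning its conditioning with this partition: the first term is bounded directly by Theorem~\ref{thm:product_condition_main_statement}; the second via $\Pr(P \cap \neg S) \leq \Pr(\neg S \mid P)$ together with Theorem~\ref{thm:syndrome_condition_main_statement}, which applies because $\H$ has the maximal-row-span and unity properties, so that the syndrome condition fails with probability at most $1 - \prod_{i=0}^{\lambda t - 1}(1 - p^{i-(n-k)})$; and the third via $\Pr(P \cap S \cap \neg I) \leq \Pr(\neg I \mid S)$ together with Theorem~\ref{thm:intersection_failure_main_statement}, which applies because $m$ is chosen so that the smallest intermediate ring has cardinality greater than $q^\lambda$. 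Summing the three bounds and subtracting from $1$ reproduces the stated probability term by term.

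I do not expect a deep obstacle, since all the analytic work already resides in the three sub-theorems; the delicate point is purely the probabilistic bookkeeping. One must be careful that the conditioning used in the sub-theorems (on ``product holds'' for the syndrome bound and on ``syndrome holds'' for the intersection bound) matches the chosen partition, so that the inequalities $\Pr(P \cap \neg S) \leq \Pr(\neg S \mid P)$ and $\Pr(P \cap S \cap \neg I) \leq \Pr(\neg I \mid S)$ are exactly what permit dropping the conjunctions and invoking each conditional bound verbatim. It is also worth verifying that the global hypotheses of the present theorem are precisely the union of the hypotheses required by the individual statements, so that no additional assumption is silently introduced.
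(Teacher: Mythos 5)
Your proposal is correct and takes essentially the same route as the paper, whose entire proof is a one-line appeal to the union bound over the three failure events from Theorems~\ref{thm:product_condition_main_statement}, \ref{thm:syndrome_condition_main_statement}, and~\ref{thm:intersection_failure_main_statement}; your disjoint decomposition into $\neg P$, $P \cap \neg S$, and $P \cap S \cap \neg I$, with the inequalities $\Pr(P\cap\neg S)\leq\Pr(\neg S\mid P)$ and $\Pr(P\cap S\cap\neg I)\leq\Pr(\neg I\mid S)$, simply spells out the conditioning bookkeeping the paper leaves implicit, and it aligns exactly with the conditional hypotheses of those sub-theorems. One caveat on your claim of a term-by-term match: your (correct) derivation yields the middle term $1-\prod_{i=0}^{\lambda t-1}\left(1-p^{i-(n-k)}\right)$ and exponent base $\lambda$ in the two sums, whereas the printed statement writes $-\prod_{i=0}^{\lambda t-1}\left(1-p^{i-(n-k)}\right)$ and $\beta$ --- these are typos in the paper, and your version is the intended bound.
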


\begin{proof}
The statement follows by applying the union bound to the failure probabilities of the three success conditions, derived in Theorems~\ref{thm:product_condition_main_statement}, \ref{thm:syndrome_condition_main_statement}, and~\ref{thm:intersection_failure_main_statement}.
\end{proof}

\section{Simulation Results}
We performed simulations of LRPC codes with $\lambda=2$, $k=8$ and $n=20$ (note that we need $k \leq \tfrac{\lambda-1}{\lambda}n$ by the unique-decoding property) over the ring ${R}_{4,20}$ ($q=4$ and $m=20$). In each simulation, we generated one parity-check matrix (fulfilling the maximal-row-span and the unity properties) and conducted a Monte Carlo simulation in which we collected exactly $1000$ decoding errors. 
All simulations gave very similar results and confirmed our analysis. We show one of the simulation results in Figure~\ref{fig:sim} where we indicate by markers the estimated probabilities of violating the product condition (S: Prod), the syndrome condition (S: Synd), the intersection condition (S: Inter) as well as the decoding failure rate (S: Dec). Further we show the derived bounds on the probabilities of not fulfilling the product condition (B: Prod) given in Theorem~\ref{thm:product_condition_main_statement}, the syndrome condition (B: Synd) derived in Theorem~\ref{thm:syndrome_condition_main_statement}, the intersection condition (B: Inter) provided in Theorem~\ref{thm:intersection_failure_main_statement} and the union bound (B: Dec) stated in Theorem~\ref{thm:main_statement}. One can observe that the bound on the probability of not fulfilling the syndrome condition is very close to the true probability while the bounds on the probabilities of violating the product and syndrome condition are loose. Gaborit \emph{et al.} have made the same observation in the case of finite fields.

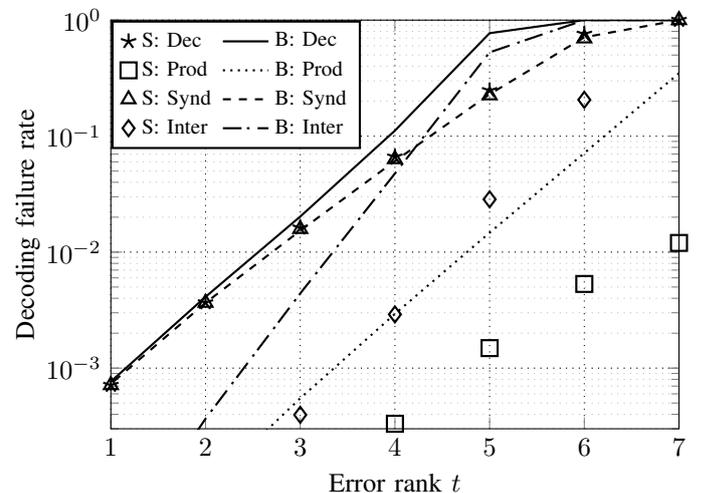
\begin{figure}[h]
  \begin{tikzpicture}
  \begin{semilogyaxis}[
legend columns=2, 
        legend style={
            /tikz/column 2/.style={
                column sep=2pt,
            },
          },
every axis plot/.append style={thick},
yticklabel style={/pgf/number format/fixed},
ylabel style={yshift=-1ex},
width=\columnwidth,
height=7.cm,
ymax = 1.0e-0,
ymin = 0.3e-3,
xmin = 1,
xmax = 7,
xtick={0,...,16},
ylabel={Decoding failure rate},
xlabel={Error rank $t$},
legend cell align=left,
grid=both,
minor grid style={dotted},
major grid style={dotted,black},
legend style={at={(0.001,0.999)},anchor=north west,thick,font={\footnotesize }},
mark options={fill=white,scale=1.5},
]
\newcommand\pLambda{2}
\newcommand\pK{16}
\newcommand\pN{2*\pK}
\newcommand\pM{30}
\newcommand\pQ{2}

\newcommand{\file}{q4_m20_lamd2_n20_k8_id3104038726_results.txt}


\addplot[mark=star, black, only marks] table [x=t, y=fer, col sep=comma]{\file};
\addlegendentry{S: Dec}

\addplot[solid] table [x=t, y=bound, col sep=comma]{\file};
\addlegendentry{B: Dec}

\addplot[mark=square, black, only marks] table [x=t, y=fer1, col sep=comma]{\file};
\addlegendentry{S: Prod}

\addplot[dotted] table [x=t, y=bound1, col sep=comma]{\file};
\addlegendentry{B: Prod}

\addplot[mark=triangle, black, only marks] table [x=t, y=fer2, col sep=comma]{\file};
\addlegendentry{S: Synd}

\addplot[dashed] table [x=t, y=bound2, col sep=comma]{\file};
\addlegendentry{B: Synd}

\addplot[mark=diamond, black, only marks] table [x=t, y=fer3, col sep=comma]{\file};
\addlegendentry{S: Inter}

\addplot[dash pattern={on 7pt off 2pt on 1pt off 3pt}] table [x=t, y=bound3, col sep=comma]{\file};

\addlegendentry{B: Inter}

\end{semilogyaxis}
\end{tikzpicture}	
  \vspace{-0.7cm}
\caption{Simulation results for $\lambda=2$, $k=8$ and $n=20$ over ${R}_{4^{20}}$. The markers indicate the estimated probabilities of not fulfilling the product condition (S: Prod), the syndrome condition (S: Synd), the intersection condition (S: Inter) and the decoding failure rate (S: Dec). The derived bounds on these probabilities are shown as lines.}
\label{fig:sim}
\end{figure}

\bibliographystyle{IEEEtran}
\bibliography{main}

\end{document}